\newtheorem{remark}{Remark}
\newtheorem{example}{Example}
\newtheorem{theorem}{\textbf{Theorem}}
\newcommand{\RN}[1]{%
\textup{\uppercase\expandafter{\romannumeral#1}}%
}
\begin{document}

\title{Single-Server Pliable Private Information Retrieval\\ with Identifiable Side Information}

\author{Megha Rayer, Charul Rajput, B. Sundar Rajan\\ Department of Electrical Communication Engineering, Indian Institute of Science, Bengaluru 560012, KA, India\\ Emails: megharayer@iisc.ac.in, charul.rajput9@gmail.com, bsrajan@iisc.ac.in

}
\maketitle

\begin{abstract}
In Pliable Private Information Retrieval (PPIR) with a single server, messages are partitioned into $\Gamma$ non-overlapping classes. The user wants to retrieve a message from its desired class without revealing the identity of the desired class to the server. In [S. A. Obead,  H. Y. Lin and E. Rosnes, \textit{“Single-Server Pliable Private Information Retrieval With Side Information,” arXiv:2305.06857 [cs.IT]}], authors consider the problem of PPIR with Side Information (PPIR-SI), where the user now has side information. The user wants to retrieve any new message (not included in the side information) from its desired class without revealing the identity of the desired class and its side information. A scheme for the PPIR-SI is given by Obead et al. for the case when the user’s side information is unidentified, and this case is referred to as PPIR with Unidentifiable SI (PPIR-USI). In this paper, we study the problem of PPIR for the single server case when the side information is partially identifiable, and we term this case as PPIR with Identifiable Side Information (PPIR-ISI). The user is well aware of the identity of the side information belonging to $\eta$ number of classes, where $1\leq \eta \leq \Gamma$. In this problem, The user wants to retrieve a message from its desired class without revealing the identity of the desired class to the server. 
We give a scheme for PPIR-ISI, and we prove that having identifiable side information is advantageous by comparing the rate of the proposed scheme to the rate of the PPIR-USI scheme given by Obead et al. for some cases. Further, we extend the problem of PPIR-ISI for multi-user case, where users can collaborately generate the query sets, and we give a scheme for this problem.

\end{abstract}

\begin{IEEEkeywords}
Private Information Retrieval (PIR), Pliable Private Information Retrieval (PPIR), Side Information.
\end{IEEEkeywords}

\section{Introduction}
In the era of data-driven applications, preserving user privacy while providing efficient rates has become a critical challenge.
The rapid growth of online services and applications has led to the accumulation of vast amounts of user data, which, when accessed without proper safeguards, can lead to privacy breaches and unauthorized disclosure of sensitive information.
Nowadays, in content delivery networks, network operators have the ability to deduce users' preferences and content popularity, contributing to malicious practices \cite{ref1},\cite{ref2}.
The Pliable Private Information Retrieval (PPIR) problem was introduced in \cite{ref5} in which the user is interested in any message from a desired subset of the available dataset without revealing the identity of the desired subset to the server. The PPIR problem itself is a variant of the Private Information Retrieval (PIR) problem, which has emerged as a pivotal field of study, addressing the balance between data accessibility and users' privacy.

\subsection{Prior Work}
B. Chor et al. \cite{ref3} formulated the foundational work that introduced the concept of PIR and explored the possibility of retrieving information from a single server by the user without revealing the desired information's identity. Subsequent research expanded upon this work to improve communication complexity, introduce information-theoretic security, and explore multi-server PIR, achieving efficiency and robustness.
With the onset of complex data-sharing scenarios and the need for more flexible PIR schemes, PPIR has garnered attention. The concept of pliability in PIR schemes addresses the need for flexible data retrieval in scenarios where the user focuses on retrieving derived data rather than the exact data. Obead and Kliewer \cite{ref5} have introduced the concept of PPIR, where the server is divided into $\Gamma$ classes, and the user derives any new message from its desired class.

Content distribution networks consist of a server with a large number of messages that are partitioned into classes based on content types. The server is connected to the users via a broadcast medium. Users randomly join in and out of the network. In such scenarios, users obtain messages from the server apriori as the messages may have been broadcasted earlier. This information is known as the side information. This problem is modelled as single-server PPIR side information (PPIR-SI), and widely studied in \cite{ref4,ref7,ref8,ref9,ref10,ref11,ref12,ref13,ref14}. Obead et al. \cite{ref6} have briefly discussed the case where the user has minimum knowledge about the server and their side information. This case was termed as PPIR unidentified side information (PPIR-USI).

In this paper, we discuss the case when some side information with the user is identifiable, i.e., the PPIR-ISI problem, and propose a scheme using maximum distance separable (MDS) codes \cite{LX2004}. An $[n, k]$ MDS code has the property that any $k$ symbols out of $n$ symbols are information symbols, i.e., the information can be recovered from any set of $k$ code symbols.
The PPIR-USI Scheme \cite{ref6} also makes use of MDS codes in which encoding is done for the messages within a class $i, i \in [\Gamma]$, whereas, in the proposed PPIR-ISI Scheme, encoding is done for the messages across all the classes.

\subsection{Contributions and organization}
To the best of our knowledge, the PPIR-ISI problem has not been examined before. In this work, we formally define the PPIR-ISI problem and propose a scheme using MDS codes. We show that having identifiable side information is helpful in reducing communication costs for certain cases while maintaining the privacy of desired class identity. Further, we define the problem of PPIR-ISI for multi-user case, where users can collaborately generate the query sets, and propose a scheme for the same.

The system model and problem setting of a PPIR-ISI problem are given in Section $\RN{2}$, where we also give some important notations. In Section $\RN{3}$, we propose a scheme for the PPIR-ISI problem with a single user and demonstrate that identifiable side information can be leveraged to construct an efficient PPIR-SI scheme for a single user. In Section $\RN{4}$, we propose a PPIR-ISI scheme for the multi-user case. The concluding remarks are given in Section $\RN{5}$.

\section{System Model and Problem Setting}
\subsection{System Model}
\begin{figure}
\centering
\begin{tikzpicture}[scale=0.70]
\node[text width=3cm, right=0cm of {2.2,1}] {\footnotesize $F$ messages divided into $\Gamma$ distinct classes.};
\node[text width=3cm, left=0cm of {-0.1,1}] {\footnotesize server};
\node[text width=3cm, left=0cm of {-0.1,-1}] {\footnotesize link};
\node[text width=3cm, left=0cm of {-0.1,-2.5}] {\footnotesize user};
\node[text width=3cm, left=0cm of {-0.1,-3.5}] {\footnotesize side information set, $S$};
\draw[->, thick] (0,0) rectangle (1.5,2); 
\draw[black, thick] (0,0.5) -- (1.5,0.5);
\draw[black, thick] (0,1) -- (1.5,1);
\draw[black, thick] (0,1.5) -- (1.5,1.5);
\draw[->, >=triangle 60][black, thick] (0.75,0) -- (0.75,-2);
\draw[<->][black, thick] (2,0) -- (2,2);
\draw[->, thick] (0.25,-2) rectangle (1.25,-3);
\draw[->, thick] (0,-3) rectangle (1.5,-3.5);
\end{tikzpicture}
\caption{Single-Server PPIR problem containing $F$ messages divided into $\Gamma$ classes with a single user having side information set, $S$.}
\end{figure}
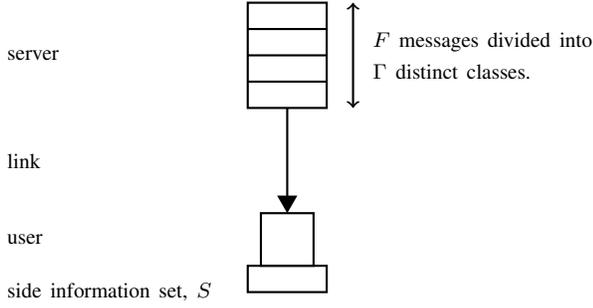
The single-server PPIR-SI problem shown in Fig.1 is described as follows.
We consider a server that stores $F$ messages $W^{(1)}, W^{(2)}, \ldots,W^{(F)}$, where each message $W^{(f)} = (W_1^{(f)}, W_2^{(f)}, \ldots, W_L^{{(f)}}), f \in [F] $ and these $L$ symbols are randomly selected from the field ${F}_q$ with $q$ elements for some $L \in \mathbb{N}$. These $F$ messages are divided into $\Gamma$ distinct classes such that each message belongs exclusively to one class and no class is empty. Without loss of generality, we assume ${\Gamma} \geq 2$.
We define $M_i$ as the set of indices of messages in Class $i$ and $\mu_i$ as the number of messages in Class $i$, i.e., $\mu_i = |M_i|$.
Therefore, the set of messages in Class $i$ is defined as: \{$W^{(f)} | f \in M_i$\}.
Since each message is exclusively present in one class, we have $M_i \cap M_j= \phi$ for $i \neq j$ and $\sum_{i=1}^{\Gamma} \mu_i = {F}$.
For Class $i$, we define a subclass index ${\beta_i}$, such that, ${\beta_i} \in [\mu_i]$. It represents the membership of a message within Class $i$.
The user has a set of side information, denoted by $S$ and the set of side information belonging to Class $i$ as $S_i$. Let $k_i= |S_i|$, then $\sum_{i=1}^{\Gamma} k_i = \kappa$, where $\kappa=|S|$.
\begin{figure}[h]
\centering
\includegraphics[width=0.5\textwidth]{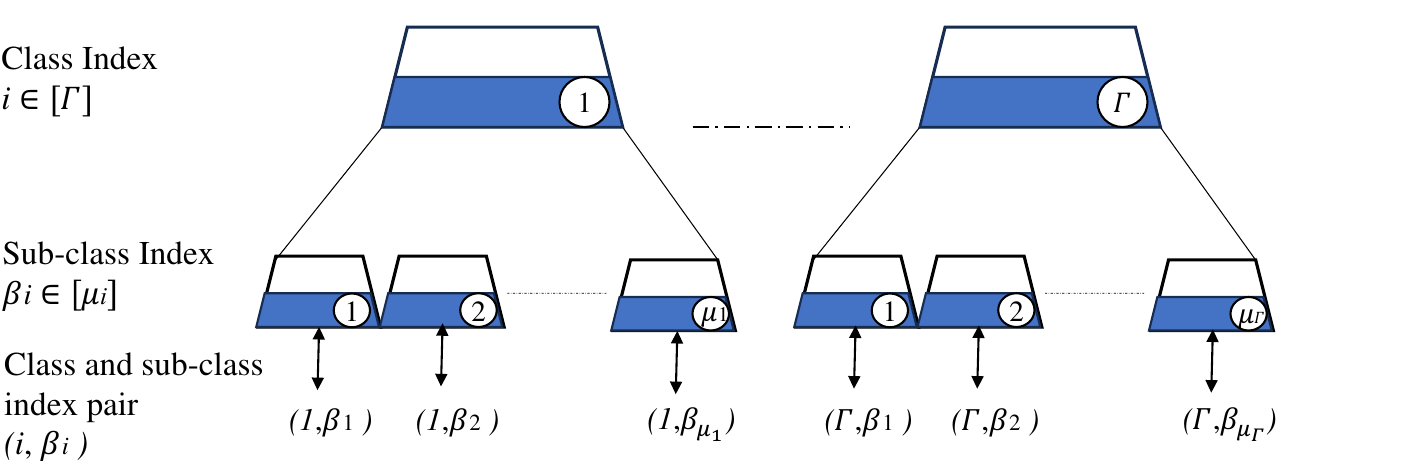}
\caption{Index mapping of $F$ messages classified into $\Gamma$ classes using class-subclass indices.}
\end{figure}
Let $I_{S_i}$ be the subclass index set of the messages belonging to $S_i$. Fig. 2 shows the index mapping of the $F$ messages into $\Gamma$ classes using class-subclass indices, which is illustrated in the following example. In this paper, the notation $Cn$ is used to represent Class $n$, where $n \in [\Gamma]$.

\begin{example}\label{ex1}
 Consider a single server with $F=9$ messages and with $\Gamma =3$ non-overlapping classes,
 \begin{center}
 	C1: \{$W^{(1)},W^{(8)},W^{(7)}\}, $\\
 	C2: \{$W^{(2)},W^{(3)},W^{(5)}\}, $\\
 	C3: \{$W^{(4)},W^{(6)},W^{(9)}\}. $
 \end{center}
Clearly, $\mu_1 = \mu_2 = \mu_3 = 3$, and 
$M_1=\{1,8,7\}$,
$M_2=\{2,3,5\}$,
$M_3=\{4,6,9\}$.
The  message $W^{(5)}$ belongs to Class $2$ and its subclass index, $\beta_i = 3$, thus, class-subclass index pair for $W^{(5)}$ is $(2,3)$. 
Similarly, class-subclass index pairs for all the messages are:
\begin{align*}
W^{(1)} &\rightarrow (1,1), W^{(8)} \rightarrow (1,2), W^{(7)} \rightarrow (1,3),\\
W^{(2)} &\rightarrow (2,1), W^{(3)} \rightarrow (2,2), W^{(5)} \rightarrow (2,3),\\
W^{(4)} &\rightarrow(3,1), W^{(6)} \rightarrow (3,2), W^{(9)} \rightarrow (3,3).
\end{align*}
We can see that, $M_i \cap M_j = \phi \ \forall \ i,j \in [3] \ and \ i \neq j$ and  $\sum_{i=1}^{3} \mu_i = {F}=9$.
Let the side information set be $S =\{W^{(1)},W^{(2)},W^{(6)}\}$. and $S_1 = \{W^{(1)}\}, S_2 = \{W^{(2)}\}, S_3 = \{W^{(6)}\}$. Further the subclass index sets are
$I_{S_1} = \{1\}, I_{S_2} = \{1\}, I_{S_3} = \{2\}$.
\end{example}

\subsection{Problem Setting}
In a single-server PPIR-SI, the user has access to $\kappa$ messages in the side information set, $S$.
In this setup, the user is aware of the size of each class, $\mu_i, \forall i \in [\Gamma]$. The user knows the class index of each message in the side information and is able to divide the side information set $S$ into the subsets $S_i, i \in [\Gamma]$, where $S_i$ is the set of all messages from the side information belonging to Class $i$. Each message in $S_i$ corresponds to a class-subclass index pair, $(i, \alpha)$, where $\alpha \in I_{S_i}$. The identifiability of a class is defined on the basis of subclass indices as follows:

\noindent\textbf{Identifiable class:} If the user knows the subclass index of all its side information belonging to a class, (i.e., the user knows the complete class-subclass index pair), then that class is called an identifiable class.

\noindent\textbf{Unidentifiable class:} If the user is unaware of the subclass index of all its side information belonging to a class, then that class is called an unidentifiable class.

\begin{remark}
There is one more possibility where the user is aware of the subclass index of some of the messages in its side information from a class, but remains unaware of the subclass index of the remaining messages in its side information from that same class. However, in this paper, we are only considering two possibilities: either the user knows all the subclass indices for its side information belonging to a class (called identifiable class), or it knows none of them (called unidentifiable class).
\end{remark}

In this problem, there is $\eta$ number of identifiable classes, where $1 \leq \eta \leq \Gamma$, and the remaining $\Gamma - \eta$ classes are unidentifiable. 
This problem is referred to as a PPIR-ISI problem.
Particularly, if all the classes are identifiable, i.e., $\eta = \Gamma$, then this case is termed as a PPIR-FSI problem and the solution for this case is discussed in Remark \ref{FSI}.

The user privately selects a class index $v$ \ $\in [{\Gamma]}$ and wishes to retrieve any new message from the desired class which is not present in its side information while keeping the requested class identity private.
Given its side information, in order to retrieve a new message
$W^{(n)} \in Cv \backslash S$, i.e., which belongs to the desired class $v$ but not in the side information $S$, the user sends a query $Q^{(v,S)}$ to the server. Query is generated without any prior knowledge of the realizations of the stored messages. In other words,
\[I(W^{(1)}, W^{(2)}, \ldots,W^{{(F)}};Q^{(v,S)})=0.\]
In response to the query, the server responds with an answer $A^{(v, S)}$ to the user. The query and answer should be such that recovery and privacy constraints are satisfied, i.e.,
\begin{equation} \label{RC}
[Recovery]\ H(W^{(n)}|A^{(v,S)},Q^{(v,S)}, v,S)=0.
\end{equation}
\begin{equation}\label{PC}
[Privacy]\ I(v;A^{(v,S)},Q^{(v,S)},W^{[F]})=0.
\end{equation}
Privacy constraint represents that the server cannot learn any information about the desired class from the queries and answers.

\newcommand{\mydefinition}[2]{\textbf{#1}:\space #2}
\noindent\mydefinition{Definition 1}{
The rate for a single server PPIR-SI denoted by $R_{PPIR-SI}$, is defined as the
ratio of the message size $L$ in $F_q$-symbols and the downloaded $F_q$-symbols $D$ to retrieve one message.
\[R_{PPIR-SI} = \frac{L}{D}.\]
}
Given that we are not dividing messages into subparts, the entire message is downloaded rather than just a part of it. Therefore, $D$ is always a multiple of $L$.

\section{MAIN RESULTS} \label{sec:3}
In the PPIR-SI problem, if only the class indices of side information messages are known and the subclass indices of side information messages are unknown, then this case is termed as PPIR with unidentifiable side information (PPIR-USI) \cite{ref6}. The rate achieved by the scheme given for such a case in \cite{ref6} is
\begin{equation}\label{USI}
R_{PPIR-USI}=\left[ {\sum_{i=1}^{\Gamma} min \{k_i +1,\mu_i - k_i\}}\right]^{-1},
\end{equation}
which achieves the capacity of the PPIR-USI problem, i.e., in a PPIR-USI problem, achieving a rate higher than this is not possible.
In the PPIR-ISI problem, we consider applications where the user is partially aware of the side information index set and server structure. We consider that $\eta$ classes out of $\Gamma$ are identifiable. Here, we propose a scheme for PPIR-ISI for the cases with the following assumptions: \\
WLOG the first $\eta$ classes are considered as identifiable and (i) $k_1, k_2, \ldots, k_\eta > k_{un}$, and (ii) $\mu_i -k_{i} \geq \lceil \frac{k_{un}+1}{\eta} \rceil, \ \forall i \in [\Gamma],$ where $k_{un}= \max\{k_{\eta+1},\ldots,k_{\Gamma}\}.$

The rate achieved by the proposed scheme is described in the following theorem.

\begin{theorem}\label{rate}
Consider a single server with F messages divided among $\Gamma$ classes out of which $\eta$ classes are identifiable and Class $i$ contains $\mu_i$ messages, $ i \in [\Gamma]$. The number of messages from Class $i$ present in the side information of user is $k_i, i \in [\Gamma]$, such that $k_1, k_2, \ldots, k_\eta > k_{un}$, and $\mu_i - k_{i} \geq \lceil \frac{k_{un}+1}{\eta} \rceil$, where $k_{un}= \max\{k_{\eta+1},\ldots,k_{\Gamma}\}$, then the following rate can be achieved.
\begin{equation}
R_{PPIR-ISI}= \frac{1}{(k_{un}+1)(\Gamma - \eta +1)}.
\end{equation}
\end{theorem}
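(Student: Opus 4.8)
The proof will be constructive: I will exhibit an explicit query/answer scheme, verify that it meets the recovery constraint \eqref{RC} and the privacy constraint \eqref{PC}, and then read off the rate by counting the number of downloaded $F_q$-symbols.

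First I would design the scheme so that the $\eta$ identifiable classes are collapsed into a single effective class. Since the user knows the complete class--subclass index pair of every side-information message in these classes, it can treat all new messages across the $\eta$ identifiable classes as a common pool of information symbols. This reduces the number of distinct ``class slots'' the query must address from $\Gamma$ to $\Gamma-\eta+1$: one slot for each of the $\Gamma-\eta$ unidentifiable classes, and one pooled slot for the identifiable classes. Each slot would consist of $k_{un}+1$ MDS-coded symbols formed as linear combinations of messages taken \emph{across} classes, using the generator matrix of an MDS code so that any $k_{un}+1$ of the involved information symbols are recoverable. Choosing the common budget $k_{un}=\max\{k_{\eta+1},\dots,k_\Gamma\}$ for every slot is what makes the per-slot download uniform and hence hideable. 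The total download is then $(k_{un}+1)(\Gamma-\eta+1)$ symbols of length $L$, which immediately yields $R=L/D=\bigl[(k_{un}+1)(\Gamma-\eta+1)\bigr]^{-1}$ once recovery and privacy are established.

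For recovery I would argue in two regimes. If the desired class $v$ is unidentifiable, its dedicated slot of $k_{un}+1$ coded symbols is arranged so that the user's $k_i\le k_{un}$ side-information messages of $Cv$ collapse the coded system --- via the MDS property --- to one solvable for at least one new message of $Cv$, with assumption (ii), $\mu_i-k_i\ge\lceil(k_{un}+1)/\eta\rceil$, guaranteeing enough fresh unknowns for the inversion to succeed. If $v\le\eta$ lies in the pooled identifiable slot, the user exploits its exact knowledge of subclass indices to subtract off all known messages and then decode a new message belonging specifically to $Cv$; here assumption (ii) ensures each identifiable class contributes its share of at least $\lceil(k_{un}+1)/\eta\rceil$ fresh information symbols, so that a new message from the particular class $v$ is always among those recovered.

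The main obstacle is the privacy proof \eqref{PC}, which requires the joint distribution of $\bigl(Q^{(v,S)},A^{(v,S)}\bigr)$ to be independent of $v$. The plan is to show that every slot is, from the server's viewpoint, structurally identical: each requests $k_{un}+1$ coded symbols over a set of messages whose composition of known and new symbols is statistically the same across slots. Randomizing the MDS coefficients and the labelling of which messages enter each combination would render the query distribution exactly invariant under $v$. This is precisely where assumption (i), $k_1,\dots,k_\eta>k_{un}$, is needed: each identifiable class must carry strictly more side information than the worst unidentifiable class, so that when merged the pooled slot can mimic an unidentifiable-class slot that budgets for $k_{un}$ known messages, leaving no statistical signature of whether the desired message was drawn from the pool or from an individual class. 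Verifying this indistinguishability --- and checking that it survives conditioning on the answer $A^{(v,S)}$ --- is the crux of the argument; the rate computation is then immediate.
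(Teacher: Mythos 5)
Your download accounting matches the paper's, but your decomposition is transposed relative to the paper's scheme and, more importantly, two of your key steps do not go through as described. The paper sends $k_{un}+1$ rounds of queries, each round naming exactly one class--subclass pair from every one of the $\Gamma$ classes and coding these $\Gamma$ symbols with a $[2\Gamma-\eta+1,\Gamma]$ MDS code; because exactly $\eta-1$ of the named messages lie in the \emph{identifiable} side information, the $\Gamma-\eta+1$ downloaded parities always suffice to decode the entire round. Your first genuine gap is in recovery for an unidentifiable desired class $v$: you propose that the user's $k_v$ side-information messages of $Cv$ ``collapse the coded system via the MDS property.'' But for an unidentifiable class the user does not know the subclass indices of its side information, so it cannot tell which positions of the codeword it already holds, and MDS erasure decoding cannot exploit those symbols. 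The mechanism that actually works (and the one the paper uses) is pigeonhole: every round is fully decoded using the identifiable known symbols, a fresh subclass index of $Cv$ is named in each of the $k_{un}+1\ge k_v+1$ rounds, and hence at least one retrieved message of $Cv$ is new. Relatedly, your stated role for assumption (i) is off: $k_1,\dots,k_\eta>k_{un}$ is needed so that each identifiable class can supply a \emph{distinct} known subclass index in each of up to $k_{un}+1$ rounds without repetition, not so that the pooled slot can ``budget for $k_{un}$ known messages.''

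The second gap is the privacy argument. Randomizing MDS coefficients and labellings does not address the actual leakage channel: the server sees, in the clear, which class--subclass pairs are requested, so privacy must be a statement that the distribution of that index pattern is invariant in $v$. The paper obtains this from Algorithm 1: for every class, the $k_{un}+1$ requested subclass indices are pairwise distinct whatever $v$ is, and since the server does not know $S$ it cannot tell which of them are fresh and which come from side information. Your slot picture makes this harder rather than easier, because a per-class slot of $k_{un}+1$ combinations over a single unidentifiable class and one pooled slot of $k_{un}+1$ combinations over $\eta$ classes are structurally different objects, and you would still need to specify which messages enter which combination before indistinguishability could even be checked. As written, the proposal fixes the right rate but leaves both the recovery and the privacy constraints unverified.
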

The achievability scheme in detail is described in subsection \RN{3} A.

\subsection{Achievability of Theorem 1}
We present a scheme for the single server PPIR-ISI problem which achieves the rate (4) based on MDS codes. The scheme is described as follows.

Given $\eta \leq \Gamma$ is the number of identifiable classes. WLOG first $\eta$ classes are considered to be identifiable, and the remaining are unidentifiable. For a desired class index $v$ $ \in [\Gamma]$ and side information set, $S$, the user sends a series of $k_{un}+1$ queries to the server along with the value of $\eta-1$. The $j^{th}$ query generated by the user is described as $Q^{(j)} (v,S) = \{ (i,\beta^{(j)}_i), \ \forall i \in [\Gamma]\}, \ \forall j \in \{0,1,...,k_{un}\}$, where $\beta^{(j)}_i$ is the subclass index choosing which is given by the Algorithm 1.

Given, $j \in \{0,1, \ldots,k_{un}\}$, the server generates an answer $A^{(j)}(v,S)$: Encoding of the $\ell$th symbol of messages corresponding to the class-subclass indices $ \{ (i,\beta^{(j)}_i), \ \forall i \in [\Gamma]\}$ with a systematic $[2\Gamma - \eta +1 , \Gamma]$ MDS code, $\forall \ \ell \in [L]$ and respond to the user with $(\Gamma - \eta +1)L$ parity symbols. The total number of symbols sent by the server is $(k_{un}+1)(\Gamma - \eta +1) L$.
Thus, we have
\begin{align*}
R_{PPIR-ISI}&=\frac{L}{(k_{un}+1)(\Gamma - \eta +1)L} \\
&= \frac{1}{(k_{un}+1)(\Gamma - \eta +1)}.
\end{align*}

Consider a query $Q^{(j)}(v, S)$ in which messages corresponding to $\eta-1$ class-subclass index pairs are already known to the user. Since, for all $\ell \in [L]$, $(\Gamma - \eta +1)$ parity symbols are sent by the server in answer $A^{(j)}(v,S)$, from the property of MDS code, the user can obtain all $2\Gamma-\eta+1$ symbols. In other words, the user can get all the messages corresponding to the class-subclass index pairs present in a query $Q^{(j)}(v, S)$ if $\eta-1$ class-subclass index pairs are already known to the user.
%

\begin{algorithm}
    \SetKwProg{Function}{Function}{}{}
    
    \Function{id($i,j,v$)}{
	Select a random variable $r \in \{0,1,\ldots, k_{un}\}$. \\
        \For{i=1 \KwTo $\eta$}{
           \uIf{$j = r $}{
                $\beta^{(r)}_v \in [\mu_v]\backslash I_{S_v}$ \\
                $\beta_{i}^{(r)} \in I_{S_{i}} \text{  for $i \neq v$ }$ $ \ \ \ \cdots \cdots\cdots \cdots$(i)
            }
            \Else{
                $\beta^{(j)}_i \in [\mu_i]\backslash \  \bigcup\limits_{m=0}^{j-1} \{\beta^{(m)}_i\} $ $ \ \ \ \cdots \cdots$(ii)
            }
        }
        \For{$i=\eta +1$ \KwTo $\Gamma$}{
            $\beta^{(j)}_i \in [\mu_i]\backslash \  \bigcup\limits_{m=0}^{j-1} \{\beta^{(m)}_i \} $ $ \ \ \ \cdots \cdots\cdots $(iii)
        }
    }

    \SetKwProg{Function}{Function}{}{}
    \Function{unid($i,j$)}{
    Define $t=j\bmod{\eta} + 1$ \\
        \For{i=1 \KwTo $\eta$}{
           $\beta^{(j)}_t \in [\mu_t]\backslash ( I_{S_t}\bigcup\limits_{m=0}^{j-1}\{\beta^{(m)}_t\})$   $ \ \ \ \cdots \cdots \cdots$(iv)\\
           $\beta_{i}^{(j)} \in I_{S_{i}} \backslash { \bigcup\limits_{m=0}^{j-1} \{\beta_{i}^{(m)}\}, } \text{  for $i \neq t$}$ $ \ \ \ \cdots\cdots $(v)
            }
        
        \For{$i=\eta +1$ \KwTo $\Gamma$}{
            $\beta^{(j)}_i \in [\mu_i]\backslash \  \bigcup\limits_{m=0}^{j-1} \{\beta^{(m)}_i\} $ $ \ \ \cdots \cdots\cdots \cdots$(vi)
        }   
     }

    \SetKwProg{Function}{Function}{}{}
    
    \Function{Main}{
        
        \For{$j= 0$ \KwTo $k_{un}$}{

            \uIf{$ v \ \in [\eta]$}{
                Call \Function{id($i,j,v$)} \
            }
            \Else{
            Call \Function{unid($i,j$)} \
            }            
        }
        
        \Return{Final result}\;
    }
    Call \Function{Main}\

    \caption{$\beta^{(j)}_i$ for Single Server PPIR-ISI Scheme}
\end{algorithm}

\begin{proof}[\textbf{Proof of correctness}]
In Algorithm 1, Function $id$ generates the subclass indices in queries if the desired class is identifiable, and Function $unid$ generates the subclass indices in queries if the desired class is unidentifiable. If the desired class is identifiable, then only one query out of $k_{un}+1$ queries is sufficient to get a new message from the desired class. In Function $id$, the number $r$ is randomly selected from $\{0,1, \ldots, k_{un}\}$. In $r$-th query $Q^{(r)}(v, S)$, an unknown subclass index $\beta_v^{(r)}$ is sent from the desired class $v$ and known subclass indices $\beta_i^{(r)}$ are sent from other $\eta-1$ identifiable classes $i \in [\eta], i \neq v$ (refer to (i)). Since in query $Q^{(r)}(v, S)$, the total $\eta-1$ class-subclass index pairs are already known to the user, all the messages corresponding to that query can be obtained. Therefore, the user will get a new message from Class $v$.
The remaining $k_{un}$ queries are sent to maintain the uncertainty between identifiable and unidentifiable classes, and in all these queries $Q^{(j)}(v, S), j \neq r$, a new subclass index is sent from each class.

If the desired class $v$ is unidentifiable, then at least $k_v+1$ messages are required from Class $v$ to guarantee a new message. The user generates a series of $k_{un} + 1$ queries. In each query, $\eta-1$ known class-subclass index pairs, i.e., class-subclass index pairs corresponding to the messages available in the side information of the user, are sent (refer to (v)). Therefore, all the messages corresponding to each query can be obtained, and the user will get $k_{un}+1$ messages from each unidentifiable class.
\end{proof}

\begin{remark}The value of $\eta -1$ is sent to the server along with the queries to set up the dimension of the MDS code.
\end{remark}

\begin{proof}[\textbf{Proof of privacy}] From Algorithm 1, it is clear that for each class $i \in [\Gamma]$, no class-subclass index pair is repeated in all $k_{un}+1$ queries, for any realization of $v \in [\Gamma]$. Therefore, the server does not get any information about the desired class from the query $Q^{(j)}(v,S)$ and the answer $A^{(j)}(v,S), i \in [\Gamma]$. Hence, the PPIR-ISI scheme satisfies the privacy constraint \eqref{PC}.
\end{proof}

The justifications for the assumptions in Theorem 1 are:
\begin{enumerate}
\item Queries are generated over $k_{un}+1$ times, where $k_{un}$ is defined as the $\max\{k_{\eta+1},\ldots,k_{\Gamma}\}$. If the desired class is identifiable, a new message from the desired class is recovered in the first query. If the desired class is unidentifiable, then according to Algorithm 1, identifiable classes require at most $k_{un}+1$ messages as its side information (refer to eq. (v) in Algorithm 1). This means that the side information of identifiable classes should be at least $k_{un}+1$. Hence the strict inequality, $k_1,k_2,\ldots,k_\eta > k_{un}$.
\item In the proposed scheme, if the desired class is unidentifiable, then from (iv) in Algorithm 1, one unknown class-subclass index is sent from one out of $\eta$ identifiable classes in each query. That means there is at least one identifiable class from which the exact number of unknown class-subclass index present in the query set is $\left \lceil \frac{k_{un}+1}{\eta} \right \rceil$. Hence, there should be at least $\left \lceil \frac{k_{un}+1}{\eta} \right \rceil$ number of messages in each class which are not present in the side information of the user, i.e.,
$\mu_i - k_{i} \geq \left \lceil \frac{k_{un}+1}{\eta} \right \rceil$.\\

\end{enumerate}

\begin{remark}
The idea of the proposed scheme is motivated by the conjecture given in \cite[Remark 2]{ref6}. However, it is hard to achieve the rate $R=\frac{1}{\Gamma-\eta+1}$ given in that conjecture for a PPIR-ISI problem. If the desired class $v$ is unidentifiable, then at least $k_v+1$ messages are required to get a new message from the desired class, which is not possible in one query $Q(v,S) = \{ (i,\beta_i), \ \forall i \in [\Gamma]\}$.
\end{remark}

\begin{remark}\label{FSI} Consider the problem of PPIR-FSI, in which all the classes are identifiable, i.e., $\eta = \Gamma$. Since there are no unidentifiable classes, we have $k_{un}=0$. Therefore, from Theorem 1, the rate achieved by the proposed scheme for the PPIR-FSI case is $R_{PPIR-FSI} =1.$ The same idea for PPIR-FSI problem was given in \cite[Remark 2]{ref6}.
\end{remark}

\begin{example} \label{ex2} Consider a single server with $\Gamma=5$ non-overlapping classes out of which $\eta=3$ classes are identifiable to the user. The total $39$ messages are distributed  in $\Gamma = 5$ classes, each message containing L = 2 symbols from $F_{11}.$ 

\begin{align*}
C1: &W^{(4)},W^{(7)},W^{(11)},W^{(12)},W^{(21)},W^{(28)},W^{(30)},\\
C2: &W^{(1)},W^{(8)},W^{(13)},W^{(17)},W^{(25)},W^{(39)},\\
C3: &W^{(2)},W^{(10)},W^{(15)},W^{(19)},W^{(23)},W^{(27)},W^{(31)},W^{(34)},\\
C4: &W^{(3)},W^{(9)},W^{(16)},W^{(18)},W^{(24)},W^{(26)},W^{(32)},W^{(35)},\\
 & W^{(37)},\\
C5:  &W^{(5)},W^{(6)},W^{(14)},W^{(20)},W^{(22)},W^{(29)},W^{(33)},W^{(36)},\\
& W^{(38)}.
\end{align*}
The class-subclass index set, i.e., $\{ (i,\beta_i),\ \forall \ \beta_i\in [\mu_i]\}$ corresponding to each message in C1 is 
$\{(1,1),(1,2),(1,3),(1,4),$ $(1,5),(1,6),(1,7)\},$ respectively.
Similarly, a class-subclass index is assigned to each message in C2, C3, C4, C5. Also,
$\mu_1=7,\mu_2=6,\mu_3=8,\mu_4=9,\mu_5=9$. Let the side information set be $S=\{W^{(11)},W^{(12)},W^{(30)},W^{(1)}, W^{(8)}, W^{(17)},\\ W^{(25)},W^{(2)},W^{(10)},W^{(15)},W^{(19)},W^{(27)},W^{(9)},W^{(16)}, W^{(5)},\\  W^{(6)}, W^{(36)}\}$.
Observe that, $k_1= 3, k_2=4, k_3=5, k_4=2, k_5=3$. WLOG, we take first three classes as identifiable and the remaining $4^{th} \text{ and } 5^{th}$ as unidentifiable, i.e., the user knows the sets $I_{S_1}=\{3,4,7\}, I_{S_2}=\{1,2,4,5\}$, $I_{S_3}=\{1,2,3,4,6\}$, and the sets $I_{S_4}=\{2,3\}$ and $I_{S_5}=\{1,2,8\}$ are not known to the user. Clearly, $k_{un} = \max\{k_4,k_5\} = 3.$
The user sends the value of $\eta-1=2$ and the following queries to the server.
$$Q^{(j)} (v,S) = \{ (i,\beta^{(j)}_i), \forall \ i \in [5] \}, \  \forall \ j \in \{0,1,2,3\}.$$ 
\noindent\textit{\textbf{Case 1}}: Let $v=3$ be the desired class and the random selected number $r=0$.
Using Algorithm 1, following queries are generated.
\begin{align*}
Q^{(0)} (v,S) &= \{ (1,3),(2,2),(3,5),(4,8),(5,3)\},\\
Q^{(1)} (v,S) &= \{ (1,6),(2,4),(3,7),(4,2),(5,5)\},\\
Q^{(2)} (v,S) &= \{ (1,4),(2,1),(3,6),(4,9),(5,1)\},\\
Q^{(3)} (v,S) &= \{ (1,5),(2,3),(3,2),(4,1),(5,2)\}.
\end{align*}
From the above query set, we can see that no index of a particular class is repeated, thus making the server oblivious to the demand class of the user. For all $ j \in \{0,1,2,3\}$, the server responds to the user with the answer $A^{(j)}(v,S)$: 
Encoding of each symbol of the messages corresponding to the class-subclass index pairs $\{ (i,\beta^{(j)}_i) \ \forall \ i \in [5] \}$ with a systematic $[8,5]$ MDS code and respond to the user with $3$ parity symbols. Therefore, the rate is $R=\frac{1}{4 \times 3}=\frac{1}{12}.$ 

To compute $A^{(0)}(v,S)$, consider the messages corresponding to the class-subclass index pairs in query $Q^{(0)}(v, S)$ given as follows.\\
$(1,3) \rightarrow W^{(11)}=\{0,  1\},$
$(2,2) \rightarrow W^{(8)}=\{1,  7\},$\\
$(3,5) \rightarrow W^{(23)}=\{ 9, 4\},$
$(4,8) \rightarrow W^{(35)}=\{6, 1\},$\\
$(5,3) \rightarrow W^{(14)}=\{8, 3\}.$\\
Let $m_1$ denote the vector containing all the first symbols of the messages corresponding to the query $Q^{(0)} (v,S)$, i.e.,
$m_1= \begin{bmatrix}

            0 &  1 & 9 & 6 & 8

        \end{bmatrix},$
and $m_2$ denote the vector containing all the second symbols of the messages corresponding to the query $Q^{(0)} (v,S)$, i.e.,
$m_2= \begin{bmatrix}

            1 &  7 & 4 & 1 & 3

        \end{bmatrix}.$
Suppose the following generator matrix of a $[8, 5]$ MDS code over $F_{11}$ is used by the server for encoding.
$$G_{5 \times 8}= \begin{bmatrix}

            1 &  0 & 0 & 0 & 0 & 1 & 5 & 4 \\
            0 &  1 & 0 & 0 & 0 & 6 & 9 & 7\\
            0 &  0 & 1 & 0 & 0 & 10 & 1 & 5 \\
            0 &  0 & 0 & 1 & 0 & 1 & 4 & 5 \\
            0 &  0 & 0 & 0 & 1 & 5 & 4 & 2

        \end{bmatrix}.$$
   Let $c_i = m_i  G$ denote the codeword generated using $[8, 5]$ MDS code for the $i^{th}$ symbols of the messages corresponding to the query $Q^{(0)} (v,S)$. Then we have,
    $$c_1= \begin{bmatrix}

            0 &  1 & 9 & 6 & 8 & 10 & 8 & 10

        \end{bmatrix},$$
and
        $$c_2= \begin{bmatrix}

            1 &  7 & 4 & 1 & 3 & 0 & 0 & 7

        \end{bmatrix}.$$
       The server will send 3 parity symbols $\{10, 8, 10\}$ and $\{ 0, 0, 7 \}$ from each $c_1$ and $c_2$, respectively. Thus, in answer $A^{(0)} (v,S)$,  total $3 \times 2=6$ parity symbols are sent to the user. Similarly, the server will compute $A^{(j)}(v,S) \ \text{for remaining} \ j \ \in \ \{1,2,3\}$ and sent it to the user.

Since the user already has $W^{(11)}$ and $W^{(8)}$ in its side information, the $1$st and $2$nd symbols of $c_1$ are already known to the user. Now the user knows $1, 2, 6, 7$ and $8$th symbols of $c_1$, and can get the whole vector $m_1$ as follows.
$$m_1 G'=m_1  \begin{bmatrix}

            1 &  0  & 1 & 5 & 4 \\
            0 &  1  & 6 & 9 & 7\\
            0 &  0  & 10 & 1 & 5 \\
            0 &  0  & 1 & 4 & 5 \\
            0 &  0  & 5 & 4 & 2

        \end{bmatrix}= \begin{bmatrix}

            0 &  1 & 10 & 8 & 10

        \end{bmatrix},$$
where $G'$ is a submatrix of $G$ consituiting $1,2,6,7$ and $8$th columns of $G$, and it is an invertible matrix from the property of an MDS code. Therefore, we have
\begin{align*}
m_1&=  \begin{bmatrix}
            0 &  1 & 10 & 8 & 10
        \end{bmatrix}\begin{bmatrix}
            1 &  0  & 6 & 7 & 4 \\
            0 &  1  & 1 & 5 & 9\\
            0 &  0  & 1 & 4 & 4 \\
            0 &  0  & 10 & 5 & 1 \\
            0 &  0  & 5 & 2 & 5
        \end{bmatrix}\\
&= \begin{bmatrix}
            0 &  1 & 9 & 6 & 8
        \end{bmatrix}.
\end{align*}
Similarly, the user will compute $m_2$ and get the message $W^{(23)}=\{9,4\}$ from the desired class C3.

\noindent\textit{\textbf{Case 2}}: Let $v=4$ be the desired class. Using Algorithm 1, following queries are generated.
\begin{align*}
Q^{(0)} (v,S) &= \{ (1,1),(2,1),(3,2),(4,9),(5,1)\},\\
Q^{(1)} (v,S) &= \{ (1,4),(2,6),(3,1),(4,8),(5,3)\},\\
Q^{(2)} (v,S) &= \{ (1,7),(2,5),(3,7),(4,2),(5,5)\},\\
Q^{(3)} (v,S) &= \{ (1,2),(2,4),(3,3),(4,1),(5,2)\}.
\end{align*}
From the above query set, we can see that no index of a particular class is repeated, thus making the server oblivious to the demand class of the user.
 For all $ j \in \{0,1,2,3\}$, the server responds to the user with the answer $A^{(j)}(v,S)$: 
Encoding of each symbol of the messages corresponding to the class-subclass index pairs $\{ (i,\beta^{(j)}_i) \ \forall \ i \in [5] \}$ with a systematic $[8,5]$ MDS code over $F_{11}$ and respond to the user with $3$ parity symbols.

In query $Q^{(0)} (v,S)$, the user already knows the messages $W^{(1)}$ and $W^{(10)}$ corresponding to the class-subclass index pairs $(2,1)$ and $(3,2)$, respectively. After receiving $3$ parity symbols, for each $\ell=1,2$, in the answer $A^{(0)}(v,S)$, the user will obtain all the messages corresponding to the $Q^{(0)} (v,S)$, i.e., $W^{(4)}, W^{(1)}, W^{(10)}, W^{(37)}$ and $W^{(5)}$.
Similarly, the user will obtain all the messages corresponding to other queries $Q^{(j)} (v,S), j =1,2,3$, and the messages obtained from Class $4$ are $W^{(37)}, W^{(35)}, W^{(9)}$ and $W^{(3)}$.
\end{example}

\textbf{Note:} In Example \ref{ex2}, if we consider all classes as unidentifiable, and use the scheme given in \cite{ref6}, then the achieved rate is $R_{PPIR-USI}=\frac{1}{16} < \frac{1}{12}.$ This shows the improvement in the rate due to identifiability. 

\begin{remark}

While the proposed scheme provides privacy for the identity of the desired class, the privacy of side information is not fully achieved. For instance, consider Example \ref{ex2}. In this scenario, the server lacks precise knowledge of which messages are included in the user's side information. However, it is aware of the following:
(1) If the desired class is identifiable, then out of the 20 class-subclass indices present in the queries, at least 2 correspond to messages from the side information, 
(2) If the desired class is unidentifiable, then out of the 20 class-subclass indices present in the queries, at least 8 correspond to messages from the side information.
Given that the server remains unaware of any details regarding the identity of the desired class, either of the two cases mentioned above could be true.
\end{remark}

In the PPIR-USI scheme \cite{ref6}, encoding of the $\ell^{th}$ symbol is done for all the messages within Class $i$, whereas, in the proposed PPIR-ISI Scheme, encoding of the $\ell^{th}$ symbol is done for the messages across all the classes. Due to this, it is hard to give a direct comparison between the rate of these two schemes due to the summation present in \eqref{USI}

For some cases, it can be shown analytically that having identifiable side information improves the rate, i.e.,  $R_{PPIR-ISI} \geq R_{PPIR-USI}$. These cases are presented in the  following three theorems.

\begin{theorem}\label{compare}
	Consider a single server with F messages divided among $\Gamma$ classes out of which $\eta$ classes are identifiable, and each Class $i$ contains $\mu_i$ messages, $ i \in [\Gamma]$. The side information from Class $i$ present with the user is $k_i, i \in [\Gamma]$ such that $k_1, k_2, \ldots, k_\eta > k_{un}$, and $\mu_i - k_{i} \geq \lceil \frac{k_{un}+1}{\eta} \rceil$, where $k_{un}= \max\{k_{\eta+1},\ldots,k_{\Gamma}\}$, then the proposed scheme for PPIR-ISI performs better than the PPIR-USI scheme \cite{ref6}, if
	$
	k_i +1 \leq \mu_i -k_i, \ \forall \ i \ \in \Gamma \ \text{ and} \
	k_{un} = k_i,   \ \forall \ i \ \in \{\eta+1,\ldots,\Gamma\} .
	$
\end{theorem}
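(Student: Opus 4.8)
The plan is to convert the rate comparison $R_{PPIR-ISI}\geq R_{PPIR-USI}$ into a single inequality between download costs and then verify it by elementary algebra; the achievability scheme of Theorem \ref{rate} need not be revisited. Both rates have the form $L/D$, so the normalized downloads are exactly the two denominators: $(k_{un}+1)(\Gamma-\eta+1)$ for the proposed scheme and $\sum_{i=1}^{\Gamma}\min\{k_i+1,\mu_i-k_i\}$ for the PPIR-USI scheme of \eqref{USI}. Since a larger rate corresponds to a smaller download, the claim is equivalent to
$$(k_{un}+1)(\Gamma-\eta+1)\;\leq\;\sum_{i=1}^{\Gamma}\min\{k_i+1,\mu_i-k_i\}.$$
So the whole proof reduces to establishing this one inequality under the two stated hypotheses.

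Next I would invoke the first hypothesis. The condition $k_i+1\leq\mu_i-k_i$ for every $i\in[\Gamma]$ forces $\min\{k_i+1,\mu_i-k_i\}=k_i+1$, collapsing the right-hand side to $\sum_{i=1}^{\Gamma}(k_i+1)$. I would then split this sum into the identifiable block $i\in[\eta]$ and the unidentifiable block $i\in\{\eta+1,\ldots,\Gamma\}$. The second hypothesis $k_{un}=k_i$ on the unidentifiable classes makes their contribution exactly $(\Gamma-\eta)(k_{un}+1)$. For the identifiable block, the standing assumption $k_1,\ldots,k_\eta>k_{un}$ among nonnegative integers gives $k_i\geq k_{un}+1$, hence $k_i+1\geq k_{un}+2$, so the identifiable contribution is at least $\eta(k_{un}+2)$.

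Combining the two blocks, the right-hand side is at least $\eta(k_{un}+2)+(\Gamma-\eta)(k_{un}+1)=\Gamma k_{un}+\Gamma+\eta$. The final step is the routine algebraic comparison: subtracting the left-hand side $(k_{un}+1)(\Gamma-\eta+1)$ from this lower bound yields $k_{un}(\eta-1)+(2\eta-1)$. Because $\eta\geq 1$, both terms are nonnegative and the second is strictly positive, so the difference is positive and the required inequality holds, completing the proof.

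I do not anticipate a genuine obstacle, since the result is essentially a one-line identity once both hypotheses are substituted. The only point demanding care is integrality: the hypothesis $k_i>k_{un}$ is a strict inequality among integers, so I must use $k_i\geq k_{un}+1$ rather than merely $k_i\geq k_{un}$; otherwise the identifiable block would yield only $\eta(k_{un}+1)$ and the decisive margin carried by the $+\eta$ term would vanish. A secondary sanity check is the boundary case $\eta=1$, where $k_{un}(\eta-1)$ disappears but positivity still follows from $2\eta-1=1>0$.
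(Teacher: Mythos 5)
Your proposal is correct and follows essentially the same route as the paper: both use the hypothesis $k_i+1\leq \mu_i-k_i$ to collapse the PPIR-USI download to $\sum_{i=1}^{\Gamma}(k_i+1)$ and then compare it with $(k_{un}+1)(\Gamma-\eta+1)$. The paper bounds each ratio $\frac{k_i+1}{k_{un}+1}\geq 1$ to get $\frac{\Gamma}{\Gamma-\eta+1}\geq 1$, whereas you split into identifiable/unidentifiable blocks and compute the difference; this is only a cosmetic variation (yours happens to exploit the strict inequality $k_i>k_{un}$ and so yields a strictly positive margin).
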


\begin{proof}
The proposed scheme works better than the PPIR-USI scheme \cite{ref6} when
$\frac{R_{PPIR-ISI}}{R_{PPIR-USI}} \geq 1 .$
Considering $
k_i +1 \leq \mu_i -k_i, \ \forall \ i \ \in \Gamma \ \text{ and} \
k_{un} = k_i,   \ \forall \ i \ \in \{\eta+1,\ldots,\Gamma\}
$, 
we have $R_{PPIR-USI}= \frac{1}{\sum_{i=1}^{\Gamma}(k_i+1)}$ and
\begin{align*}
\frac{R_{PPIR-ISI}}{R_{PPIR-USI}} &= \frac{\sum_{i=1}^{\Gamma}(k_i+1)}{(k_{un}+1)(\Gamma-\eta+1)} \\
&= \frac{1}{\Gamma-\eta+1} \times \left[\frac{k_1+1}{k_{un}+1}+\cdots+\frac{k_\eta+1}{k_{un}+1} \right. \\
&\quad \left.+\frac{k_{\eta+1}+1}{k_{un}+1} + \cdots+\frac{k_\Gamma+1}{k_{un}+1}\right]\\ 
& \geq \frac{\Gamma}{\Gamma -\eta +1} \geq 1. \qquad \text{(as $k_i \geq k_{un}, \forall i \in [\Gamma]$)} 
\end{align*}
\end{proof}

Now we present the other two cases, for which, we show that having identifiable side information is advantageous, i.e., $R_{PPIR-ISI} \geq R_{PPIR-USI}$.  For these two cases the conditions taken in Theorem \ref{rate} are also assumed in the the following two theorems.
\begin{itemize}
\item $k_1, k_2, \ldots, k_\eta > k_{un}$, 
\item $\mu_i -k_{i} \geq \left\lceil \frac{k_{un}+1}{\eta} \right\rceil, \ \forall i \in [\Gamma].$ 
\end{itemize}

\begin{theorem}\label{compare2}
If $k=k_i, \ \forall i \in [\eta]$,  $k_{un}=k_i, \ \forall i \in \{\eta+1, \ldots, \Gamma \}$, and $\mu_i=\mu$, for all $i \in [\Gamma]$ such that $k_i+1 \geq \mu -k_i$, then the proposed scheme works better than the PPIR-USI scheme\cite{ref6}, when $\mu \geq k + \left \lceil \frac{(\Gamma - \eta +1)(k_{un}+1)}{\Gamma } \right \rceil$. 
\end{theorem}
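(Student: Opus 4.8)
The plan is to reduce the claimed rate improvement to a single scalar inequality between denominators and then discharge it using the hypothesis on $\mu$. First I would evaluate $R_{PPIR-USI}$ under the stated assumptions. The condition $k_i+1 \geq \mu - k_i$ forces the minimum inside \eqref{USI} to select its second argument, so $\min\{k_i+1,\mu_i-k_i\} = \mu - k_i$ for every $i \in [\Gamma]$. Substituting $k_i = k$ for the $\eta$ identifiable classes and $k_i = k_{un}$ for the $\Gamma-\eta$ unidentifiable classes yields
$$R_{PPIR-USI} = \frac{1}{\eta(\mu-k)+(\Gamma-\eta)(\mu-k_{un})}.$$

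Next I would observe that, since both rates are reciprocals of positive integers, the desired inequality $R_{PPIR-ISI} \geq R_{PPIR-USI}$ is equivalent to comparing denominators, namely
$$\eta(\mu-k)+(\Gamma-\eta)(\mu-k_{un}) \geq (k_{un}+1)(\Gamma-\eta+1).$$
Expanding the left-hand side collapses it to $\Gamma\mu - \eta k - (\Gamma-\eta)k_{un}$, so the goal becomes a clean lower bound on $\Gamma\mu$.

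The key step is to feed in the hypothesis $\mu \geq k + \lceil \frac{(\Gamma-\eta+1)(k_{un}+1)}{\Gamma}\rceil$. Writing $A = (\Gamma-\eta+1)(k_{un}+1)$, this reads $\mu - k \geq \lceil A/\Gamma\rceil$, whence $\Gamma(\mu - k) \geq \Gamma\lceil A/\Gamma\rceil \geq A$, i.e. $\Gamma\mu \geq \Gamma k + A$. Substituting this bound gives
$$\Gamma\mu - \eta k - (\Gamma-\eta)k_{un} \geq (\Gamma-\eta)(k-k_{un}) + A.$$
Finally I would invoke the standing assumption $k_1,\ldots,k_\eta > k_{un}$, so that $k > k_{un}$ makes the leftover term $(\Gamma-\eta)(k-k_{un})$ nonnegative; the left-hand side is therefore at least $A = (k_{un}+1)(\Gamma-\eta+1)$, which is exactly the required bound and completes the comparison.

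I do not anticipate a deep obstacle here, as the argument is essentially algebraic bookkeeping. The two points that need care are (i) correctly identifying which branch of the minimum in \eqref{USI} is active, which is precisely what the hypothesis $k_i+1 \geq \mu - k_i$ controls, and (ii) handling the ceiling cleanly through the elementary bound $\Gamma\lceil A/\Gamma\rceil \geq A$ rather than dropping the ceiling prematurely. Everything else follows by expansion and the sign of $k - k_{un}$.
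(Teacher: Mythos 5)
Your proposal is correct and follows essentially the same route as the paper's proof: both reduce to the denominator comparison $\eta(\mu-k)+(\Gamma-\eta)(\mu-k_{un}) \geq (k_{un}+1)(\Gamma-\eta+1)$ and discharge it using the two ingredients $k>k_{un}$ and $\mu-k \geq \lceil (\Gamma-\eta+1)(k_{un}+1)/\Gamma\rceil$, merely applying them in the opposite order. Your explicit handling of the ceiling via $\Gamma\lceil A/\Gamma\rceil \geq A$ is a minor tidiness improvement over the paper, which silently drops the ceiling.
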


\begin{proof} For the given conditions, we have
\begin{align*}
\frac{R_{PPIR-ISI}}{R_{PPIR-USI}}&= \frac{{\sum_{i=1}^{\Gamma} (\mu - k_i)}}{(k_{un}+1)(\Gamma -\eta +1)}\\
&=\frac{\eta (\mu - k)+(\Gamma - \eta) (\mu - k_{un})}{(k_{un}+1)(\Gamma -\eta +1)}\\
&\geq \frac{\Gamma (\mu - k)}{(k_{un}+1)(\Gamma -\eta +1)} \qquad \text{(as $k >k_{un}$)} \\
& \geq 1.  \qquad \left(\text{as $\mu -k \geq  \frac{(\Gamma - \eta +1)(k_{un}+1)}{\Gamma } $}\right)
\end{align*}
Hence, we have $R_{PPIR-ISI} \geq R_{PPIR-USI}$.
\end{proof}

\begin{theorem}\label{compare3}
If $k_i+1 \geq \mu_i -k_i$, for all $i \ \in [\Gamma]$, and only one class is identifiable, i.e., $\eta=1$ then the proposed scheme works better than the PPIR-USI scheme\cite{ref6}.
\end{theorem}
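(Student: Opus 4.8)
The plan is to follow the same ratio-based strategy used in Theorems \ref{compare} and \ref{compare2}: I would show that $R_{PPIR-ISI} \geq R_{PPIR-USI}$ by establishing $R_{PPIR-ISI}/R_{PPIR-USI} \geq 1$. Setting $\eta = 1$ throughout, the achievable rate of Theorem \ref{rate} becomes $R_{PPIR-ISI} = \frac{1}{\Gamma(k_{un}+1)}$, so the whole claim reduces to bounding the downloaded symbol count of the PPIR-USI scheme from above by $\Gamma(k_{un}+1)$.

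The first step is to simplify the PPIR-USI rate \eqref{USI}. The hypothesis $k_i + 1 \geq \mu_i - k_i$ for all $i \in [\Gamma]$ forces $\min\{k_i+1, \mu_i - k_i\} = \mu_i - k_i$ in every term of the sum, so that $R_{PPIR-USI} = \left[\sum_{i=1}^{\Gamma}(\mu_i - k_i)\right]^{-1}$. Forming the ratio then gives
\[
\frac{R_{PPIR-ISI}}{R_{PPIR-USI}} = \frac{\sum_{i=1}^{\Gamma}(\mu_i - k_i)}{\Gamma(k_{un}+1)},
\]
and the claim reduces to the single inequality $\sum_{i=1}^{\Gamma}(\mu_i - k_i) \geq \Gamma(k_{un}+1)$.

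The key step is the per-class bound. Since the hypotheses of Theorem \ref{rate} are also in force here, we have $\mu_i - k_i \geq \lceil (k_{un}+1)/\eta \rceil$ for every $i$; specializing to $\eta = 1$, the ceiling collapses to $k_{un}+1$, giving $\mu_i - k_i \geq k_{un}+1$ for each $i \in [\Gamma]$. Summing over the $\Gamma$ classes yields $\sum_{i=1}^{\Gamma}(\mu_i - k_i) \geq \Gamma(k_{un}+1)$, which is exactly the inequality needed, and hence $R_{PPIR-ISI}/R_{PPIR-USI} \geq 1$. I expect no genuine obstacle here: the result is a clean specialization of Theorem \ref{rate}'s standing assumptions, and the only point worth flagging is that the collapse of $\lceil (k_{un}+1)/\eta \rceil$ to $k_{un}+1$ at $\eta=1$ is precisely what makes the per-class lower bound match the denominator factor term by term, so the conclusion follows without any slack.
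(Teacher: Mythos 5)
Your proposal is correct and follows essentially the same route as the paper's own proof: both reduce the claim to the ratio $\frac{\sum_{i=1}^{\Gamma}(\mu_i-k_i)}{(k_{un}+1)(\Gamma-\eta+1)}$ using $\min\{k_i+1,\mu_i-k_i\}=\mu_i-k_i$, and both invoke the standing assumption $\mu_i-k_i\geq \lceil (k_{un}+1)/\eta\rceil = k_{un}+1$ per class to bound the sum below by $\Gamma(k_{un}+1)$. The only cosmetic difference is that you substitute $\eta=1$ at the outset while the paper keeps $\Gamma-\eta+1$ symbolic until the last line.
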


\begin{proof} We know that for all $i \in [\Gamma]$,   
$$\mu_i -k_{i} \geq \left\lceil \frac{k_{un}+1}{\eta} \right\rceil =k_{un}+1.$$
 We have
\begin{align*}
\frac{R_{PPIR-ISI}}{R_{PPIR-USI}}&= \frac{ {\sum_{i=1}^{\Gamma} (\mu_i - k_i)}}{(k_{un}+1)(\Gamma -\eta +1)} \\
&=\frac{1}{\Gamma -\eta +1} \sum_{i=1}^{\Gamma} \frac{\mu_i-k_i}{k_{un}+1}\\
& \geq \frac{1}{\Gamma -\eta +1} \sum_{i=1}^{\Gamma} 1 \quad \text{(as $\mu_i -k_{i} \geq k_{un}+1$)}\\
&=\frac{\Gamma}{\Gamma -\eta +1}=1.
\end{align*}
\end{proof}

It is to be noted that Example \ref{ex2} does not fall into any of the cases given in Theorems \ref{compare}, \ref{compare2} and \ref{compare3}. Clearly, the rate of the proposed PPIR-ISI scheme is greater than the rate of the PPRIR-USI scheme \cite{ref6} in Example \ref{ex2}. This indicates that apart from the three cases specified in Theorems \ref{compare}, \ref{compare2} and \ref{compare3}, there exist other scenarios in which the proposed PPIR-ISI scheme performs better than the PPIR-USI scheme.

In the following example, the parameters of PPIR-ISI satisfy the conditions given in Theorem \ref{compare}, and  the proposed scheme performs better than the PPIR-USI scheme given in \cite{ref6}.

\begin{example} \label{ex3}
Consider a single server with $\Gamma=6$ non-overlapping classes out of which $\eta=3$ classes are identifiable to the user. The total 49 messages are distributed  in $\Gamma = 6$ classes, and each message is containing L = 2 symbols from $F_{11}.$
\begin{align*}
C1:  &W^{(1)},W^{(6)},W^{(11)},W^{(21)},W^{(26)}, W^{(35)}, W^{(41)}, W^{(43)},\\ & W^{(49)},\\
C2: &W^{(2)},W^{(7)},W^{(12)},W^{(17)},W^{(22)},W^{(36)}, W^{(38)},W^{(44)}, \\ & W^{(47)},\\
C3: &W^{(3)},W^{(13)},W^{(18)},W^{(28)},W^{(33)},W^{(37)}, W^{(39)},W^{(40)}, \\ & W^{(42)}, W^{(46)}\\
C4: &W^{(9)},W^{(14)},W^{(19)},W^{(24)},W^{(29)}, W^{(48)},\\
C5:  &W^{(5)},W^{(10)},W^{(15)},W^{(20)},W^{(25)},W^{(30)},W^{(34)},\\
C6: &W^{(4)},W^{(8)}, W^{(16)}, W^{(23)}, W^{(27)}, W^{(31)}, W^{(32)}, W^{(45)}.
\end{align*}
The class-subclass index set corresponding to each message in C1 is 
$\{(1,1),(1,2),(1,3),(1,4),(1,5),(1,6),(1,7), (1,8), (1,9)\},$ respectively.
Similarly, a class-subclass index is assigned to each message in C2, C3, C4, C5, C6. Also,
$\mu_1=9,\mu_2=9,\mu_3=10,\mu_4=6,\mu_5=7, \mu_6=8$. Let the side information set be  $S=\{W^{(6)},W^{(11)},W^{(21)},W^{(41)}, W^{(2)},\\ W^{(12)},  W^{(17)}, W^{(44)},W^{(3)},W^{(33)},W^{(39)},W^{(14)},W^{(19)}, \\ W^{(10)},
W^{(25)},  W^{(23)},W^{(45)}\}$.
Observe that, $k_1= 4, k_2=4, k_3=3, k_4=2, k_5=2, k_6=2$. WLOG, take the first three classes as identifiable and the remaining $4^{th}, 5^{th} \text{ and } 6^{th}$ as unidentifiable, i.e., the user knows the sets $I_{S_1}=\{2,3,4,7\}$, $I_{S_2}=\{1,3,4,8\}$, $I_{S_3}=\{1,5,7\}$, and the sets  $I_{S_4}=\{2,3\}$, $I_{S_5}=\{2,5\}$ and $I_{S_6}=\{4,8\}$ are not known to the user. Clearly, $ k_4=k_5=k_6=2,$ and thus, $k_{un} = 2.$
Let $v=4$ be the desired class. Then the user sends the value of $\eta-1=2$ and following queries to the server.  
\begin{align*}
Q^{(0)} (v,S) &= \{ (1,9),(2,3),(3,5),(4,3),(5,2),(6,8)\},\\
Q^{(1)} (v,S) &= \{ (1,4),(2,2),(3,7),(4,6),(5,1),(6,1)\},\\
Q^{(2)} (v,S) &= \{ (1,7),(2,8),(3,10),(4,2),(5,3),(6,6)\}.
\end{align*}
From the above query set, we can see that no subclass index from a particular class is repeated, thus making the server oblivious to the demand class of the user.
 For all $ j \in \{0,1,2\}$, the server responds to the user with the answer $A^{(j)}(v,S)$: 
Encoding of each symbol of the messages corresponding to the class-subclass index pairs $\{ (i,\beta^{(j)}_i) \ \forall \ i \in [6] \}$ with a systematic $[10,6]$ MDS code over $F_{11}$ and respond to the user with $4$ parity symbols. Therefore, the rate is 
$$R_{PPIR-ISI}=\frac{1}{3 \times 4}=\frac{1}{12}.$$
In query $Q^{(0)} (v,S)$, the user already knows the messages $W^{(12)}$ and $W^{(33)}$ corresponding to the class-subclass index pairs $(2,3)$ and $(3,5)$, respectively. After receiving $4$ parity symbols, for each $\ell=1,2$, in the answer $A^{(0)}(v,S)$, the user will obtain all the messages corresponding to the $Q^{(0)} (v,S)$, i.e., $W^{(49)}, W^{(12)}, W^{(33)}, W^{(19)}, W^{(10)}$ and $W^{(45)}$.
Similarly, the user will obtain all the messages corresponding to other queries $Q^{(j)} (v,S), j =1,2$, and the messages obtained from Class 4 are $W^{(19)}, W^{(48)}$ and $W^{(14)}$, in which $W^{(48)}$ is not present in the side information of the user.\\

\end{example}

\textbf{Note:} In Example \ref{ex3}, if we consider all classes as unidentifiable, and use the scheme given in \cite{ref6}, then the achieved rate is $$R_{PPIR-USI}=\frac{1}{\sum_{i=1}^{6} (k_i+1)}=\frac{1}{23} < \frac{1}{12}.$$

\section{Single-Server PPIR-ISI for Multi-User Scenario}
In the previous section, we considered the PPIR-ISI problem, where there was only one user who was partially aware of the side information and the server structure. 
In this section, we extend the problem to a multi-user setting and propose a scheme, and establish the rate achieved by it.\\

\noindent\mydefinition{Definition 2}{
The rate for single server PPIR-SI with $U$ users is defined as the ratio of the message size $L$ in $F_q$-symbols and the downloaded $F_q$-symbols $D$ to retrieve one desired message by each user, i.e., 
\[R_{PPIR-SI} = \frac{L}{D}.\]

}

Consider a server with $\Gamma$ classes out of which $\eta$ classes are identifiable to each user $u \ \in \ [U]$. Each user has a desired class, denoted by $v_u, \ \forall \ u \ \in \ [U]$.
One possible solution to this problem is that the demands of all the users are satisfied by sending a separate query set for each user using the PPIR-ISI scheme given in Section \RN{3}. 
 Since there are $U$ users, Algorithm 1 will run $U$ times. The rate for this case will be 
\begin{equation}
R_{PPIR-ISI}= \frac{1}{U(k_{un}+1)(\Gamma - \eta +1)},
\end{equation}
assuming that the parameter $k_{un}$ is same of each user $u \in [U]$. 
The total number of downloaded messages will be very high. Hence, we develop the Multi-user PPIR-ISI scheme for the system model described in Fig. 3, in which users can collude among themselves and generate a query set cooperatively.\\

\begin{figure}[h]
\centering
\begin{tikzpicture}[scale=0.70]
    \draw[->, thick] (0,0) rectangle (1.5,2);
    \draw[black, thick] (0,0.5) -- (1.5,0.5);
    \draw[black, thick] (0,1) -- (1.5,1);
    \draw[black, thick] (0,1.5) -- (1.5,1.5);
    \draw[->, >=triangle 60][black, thick] (0.75,0) -- (0.75,-3);
    \draw[<->][black, thick] (2,0) -- (2,2);
    \node[text width=3cm, right=0cm of {2.2,1}] {\footnotesize F messages divided into $\Gamma$ distinct classes.};

    \draw[->, thick] (0.25,-3) rectangle (1.25,-4);
    \draw[->, thick] (0,-4) rectangle (1.5,-4.5);

    \draw[->, thick] (3,-3) rectangle (4,-4);
    \draw[->, thick] (2.75,-4) rectangle (4.25,-4.5);

    \draw[->, thick] (-2.5,-3) rectangle (-1.5,-4);
    \draw[->, thick] (-2.75,-4) rectangle (-1.25,-4.5);
    
    \node[text width=3cm, left=0cm of {-1.5,1}] {\footnotesize server};
    \node[text width=3cm, left=0cm of {-1.5,-1}] {\footnotesize shared link};
    \node[text width=3cm, left=0cm of {-1.5,-3.25}] {\footnotesize U users};
    \node[text width=2cm, left=0cm of {-3,-4.5}] {\footnotesize side information set, ${S^u}$};
    
    \draw[->, >=triangle 60] (0.75, -1.05) -- (-2, -3);
    \draw[->, >=triangle 60] (0.75, -1.05) -- (3.5, -3);
    
\end{tikzpicture}
\caption{The Single-Server PPIR problem containing F files divided into $\Gamma$ classes with U, each having side information set $S^u, \ \forall \ u \in [U]$.}
\end{figure}
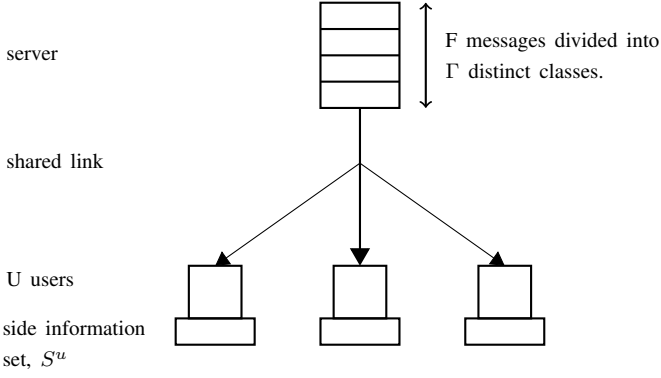

\subsection{Multi-user PPIR-ISI scheme}
Consider a single server with $\Gamma$ non-overlapping classes. There are $U$ users, and each user has a side information set denoted by $S^{u}, \  \forall  \ u  \in [U]$, where $S^u = \bigcup_{i=1}^{\Gamma} S_i^u$ and $S_i^u$ denotes the set of messages from class $i$ in the side information set of user $u$. Let $I_{S_i^u}$ denote the subclass indices of files from class $i$ in the side information set of user $u$.
To illustrate the Multi-user scheme, we consider the following example:
\begin{example}\label{ex4}
Consider a single server with $\Gamma =3$ non-overlapping classes,
\begin{center}
C1: \{$W^{(1)},W^{(8)},W^{(7)}\}, $\\
C2: \{$W^{(2)},W^{(3)},W^{(5)}\}, $\\
C3: \{$W^{(4)},W^{(6)},W^{(9)}\}. $\\
\end{center}
Consider the message $W^{(3)}$ message.  It belongs to Class 2 and its subclass index, $\beta_i = 2$, thus, the class-subclass index pair for $W^{(3)}$ is $(2,2)$. 
Similarly, class-subclass index pairs for all the messages are:
\begin{align*}
W^{(1)} &\rightarrow (1,1), W^{(8)} \rightarrow (1,2), W^{(7)} \rightarrow (1,3),\\
W^{(2)} &\rightarrow (2,1), W^{(3)} \rightarrow (2,2), W^{(5)} \rightarrow (2,3),\\
W^{(4)} &\rightarrow(3,1), W^{(6)} \rightarrow (3,2), W^{(9)} \rightarrow (3,3).
\end{align*}
Given, the number of users are $U = 2$, then side information set $S^u$ for user $u \ \in [2]$ is,
$S^1=\{W^{(1)},W^{(2)},W^{(6)}\}$ and $S^2=\{W^{(8)},W^{(5)},W^{(9)}\}$. Clearly, for User $1$, $S_1^1 =\{W^1\} , S_2^1 =\{W^2\}, S_3^1 =\{W^6\} \text{ and for User 2},\ S_1^2 = \{W^8\}, S_2^2 = \{W^5\}, S_3^2 = \{W^9\} $. Also, the subclass indices of messages for User 1 are, $I_{S_1^1} =\{1\}, I_{S_2^1} =\{1\}, I_{S_3^1} =\{2\} \text{ and for User }2 \text{ are }, I_{S_1^2} = \{2\}, I_{S_2^2} = \{3\}, I_{S_3^2} = \{3\} $.
\end{example}

WLOG the first $\eta$ classes are considered as identifiable. The following assumptions are made: With $k_{un}^u= \max\{k_{\eta+1}^u,\ldots,k_{\Gamma}^u\}$ and $k_{un}=\max\{k_{un}^1,\ldots,k_{un}^U\}$ we have  
\begin{enumerate}
\item $k_1^u, k_2^u, \ldots, k_\eta^u >k_{un} \ \forall \ u \ \in \ [U]$ , 
\item $k_{un} + 1 \geq U$, 
\item For all $i \ \in \ [\eta]$ and $u \ \in \ [U], \ \mu_i - \ k_i^u \geq k_{un}+1$.
\end{enumerate}
The rate achieved by the proposed scheme for multi-users is described in the following theorem.
\begin{theorem} 
	\label{thm5}
    Consider a single server with F messages divided among $\Gamma$ classes out of which $\eta$ classes are identifiable and Class $i$ contains $\mu_i$ messages, $ i \in [\Gamma]$. The total number of side information messages from Class $i$ present with the user $u$ is $k_i^u, i \in [\Gamma], u \ \in \ [U]$ such that $k_1^u, k_2^u, \ldots, k_\eta^u > k_{un}$, where $k_{un}$ is defined as $max\{k_{un}^1,k_{un}^2,\ldots,k_{un}^U\}.$ The messages present in each class $i$ are such that, $\mu_i - \ k_i^u \geq k_{un}+1.$ Then the following rate can be achieved.
    \begin{equation}
        R_{PPIR-ISI}= \frac{1}{(k_{un}+1)(\Gamma - \lceil \frac{\eta-1}{U} \rceil)}. \label{multi-user rate}
    \end{equation}
     
\end{theorem}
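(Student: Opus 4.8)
The plan is to mirror the single-user achievability of Theorem \ref{rate}, but to replace the per-user query generation of Algorithm 1 by a single \emph{cooperative} query set that the $U$ users build jointly, and to track how pooling side information lowers the number of parity symbols the server must return. First I would fix $k_{un}+1$ query rounds, where in round $j$ the users agree on one class--subclass index pair $(i,\beta_i^{(j)})$ for every class $i\in[\Gamma]$, exactly as in the single-user scheme; the server then encodes the $\ell$th symbols of the $\Gamma$ addressed messages with a systematic $[\,2\Gamma-\lceil\frac{\eta-1}{U}\rceil,\ \Gamma\,]$ MDS code and returns the $\Gamma-\lceil\frac{\eta-1}{U}\rceil$ parity symbols for each $\ell\in[L]$. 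Note this reduces to the $[2\Gamma-\eta+1,\Gamma]$ code of Theorem \ref{rate} when $U=1$. The rate bookkeeping is then immediate: the total download is $D=(k_{un}+1)\big(\Gamma-\lceil\frac{\eta-1}{U}\rceil\big)L$ symbols, so by Definition 2, $R=L/D=\frac{1}{(k_{un}+1)(\Gamma-\lceil\frac{\eta-1}{U}\rceil)}$, which is \eqref{multi-user rate}.

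The heart of the argument is decodability. For a fixed round the received parity symbols number $\Gamma-\lceil\frac{\eta-1}{U}\rceil$, so by the MDS property any user who additionally knows $\lceil\frac{\eta-1}{U}\rceil$ of the $\Gamma$ systematic symbols already has $\Gamma$ symbols and can reconstruct the whole length-$(2\Gamma-\lceil\frac{\eta-1}{U}\rceil)$ codeword, hence every message placed in that round. I would therefore design the cooperative generator so that, in each round, the pairs placed in the $\eta$ identifiable classes are drawn from the users' side information in a \emph{balanced} way: the $\eta-1$ ``helper'' identifiable positions that a lone user needed in Theorem \ref{rate} can now be shared among the $U$ colluding users, so it suffices to guarantee that every user finds at least $\lceil\frac{\eta-1}{U}\rceil$ of the addressed identifiable pairs inside its own sets $I_{S_i^u}$. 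The assumptions $k_i^u>k_{un}$ and $\mu_i-k_i^u\ge k_{un}+1$ for $i\in[\eta]$ give each identifiable class both enough already-known indices to serve as helpers and enough unused indices to supply fresh messages across all $k_{un}+1$ rounds, while $k_{un}+1\ge U$ provides enough rounds to deliver a new message to each user; I would split the correctness argument into the desired-class-identifiable and desired-class-unidentifiable cases, just as in the proof of correctness of Theorem \ref{rate}.

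For privacy I would reuse the invariant that drives the single-user proof: show that, by construction, no subclass index is ever repeated within a class across the $k_{un}+1$ rounds, for \emph{every} realization of the demand vector $(v_1,\dots,v_U)$. Because each class then contributes a fixed-size set of distinct indices independent of the demands, the server's view $\{Q^{(j)},A^{(j)}\}$ is identically distributed regardless of which classes are desired, giving the privacy constraint \eqref{PC} for all $U$ users simultaneously.

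The step I expect to be the main obstacle is establishing the existence of the balanced cooperative assignment claimed above. One must show that under $k_i^u>k_{un}$, $\mu_i-k_i^u\ge k_{un}+1$, and $k_{un}+1\ge U$ there is always a placement of the identifiable-class indices over the $k_{un}+1$ rounds such that (i) each user recovers a genuinely \emph{new} message from its own desired class, (ii) each user sees at least $\lceil\frac{\eta-1}{U}\rceil$ helper pairs from its own side information in every round it must decode, and (iii) no index repeats within any class. This is essentially a combinatorial scheduling/covering problem that couples the recovery, load-balancing, and non-repetition constraints simultaneously across all users, and it is precisely here that the ceiling $\lceil\frac{\eta-1}{U}\rceil$ and the hypothesis $k_{un}+1\ge U$ must be invoked with care; once the assignment is exhibited, the MDS decoding and rate computations are routine.
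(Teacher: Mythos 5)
Your proposal follows essentially the same route as the paper: the same $[\,2\Gamma-\lceil\frac{\eta-1}{U}\rceil,\ \Gamma\,]$ systematic MDS code over $k_{un}+1$ rounds, the same rate bookkeeping, the same privacy argument via non-repetition of subclass indices within each class, and the same key idea of splitting the $\eta-1$ helper positions of the single-user scheme into $U$ blocks of size $\eta'=\lceil\frac{\eta-1}{U}\rceil$, one per user. The one step you defer --- exhibiting the balanced assignment --- is exactly what the paper's Algorithm 2 supplies, and it is simpler than the general ``combinatorial scheduling/covering problem'' you anticipate: in round $j$ the users are served round-robin via $u=\phi(j)^{[U]}$ (essentially $j \bmod U$); if user $j$'s desired class $v_j$ is identifiable and $j\le U$, the round places one index from $[\mu_{v_j}]\setminus I_{S_{v_j}^{u}}$ not used before (otherwise an arbitrary identifiable class plays the role of $v$); a fixed subset $Z\subseteq[\eta]\setminus\{v\}$ of size $\eta'U$ is partitioned into $Z_1,\dots,Z_U$ with $|Z_i|=\eta'$, and each class $t\in Z_i$ receives an index from $I_{S_t^{i}}$ not used in earlier rounds, so user $i$ always knows its own $\eta'$ systematic symbols; all remaining classes receive fresh unused indices. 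Feasibility of these choices across all $k_{un}+1$ rounds is exactly where $k_i^u>k_{un}$ (at least $k_{un}+1$ known indices per identifiable class per user) and $\mu_i-k_i^u\ge k_{un}+1$ (at least $k_{un}+1$ fresh indices) are invoked, and $k_{un}+1\ge U$ guarantees every user with an identifiable demand gets its own round; for an unidentifiable demand, the $k_{un}+1$ distinct indices drawn from class $v_u$ exceed $k_{v_u}^u$, so at least one retrieved message is new. With that construction written out, your argument coincides with the paper's.
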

The achievability scheme in detail is described in the following subsection.

\subsection{Achievability of Theorem 5}
For notational convenience we define the variable $\eta^{\prime} = \lceil \frac{\eta-1}{U} \rceil.$  
Given that $\eta \leq \Gamma$ is the number of identifiable classes, WLOG the first $\eta$ classes are considered to be identifiable and the remaining as unidentifiable for all the users. For the desired class indices denoted by $(v_1,v_2,\ldots,v_U)\text{, where }v_u \in \ [\Gamma]$ and side information set, $S^u, \ \forall \ u \in [U]$, the users send a series of $k_{un}+1$ queries to the server along with the value of $\eta^{\prime}$. The $j^{th}$ query generated by the users is  
$$Q^{(j)} (v,S) = \{ (i,\beta^{(j)}_i)\ \forall \ i \ \in [\Gamma]\}, \ \forall j \in \{1, \ldots,k_{un}+1\},$$ 
where $\beta^{(j)}_i$ is the subclass index. Now for choosing the values of $\beta^{(j)}_i$, the following function and Algorithm 2 is used.\\
\[ \phi {(j)}^{[U]} = \begin{cases} 
          j \bmod U; & j \text{ does not divide } U  \\
          U; & j \text{ divides } U  
       \end{cases}.
    \]

\begin{algorithm}

    \For {$j=1$ \KwTo $k_{un} +1$}{
        \uIf{$1 \leq v_j \leq \eta$ \& $j \leq U$}{
        \vspace{0.2cm}
        $v=v_j$        
        }
        \Else{
        $v=k$\\
        where, $k$ is any arbitrary number between $1$ and $\eta$.
        }
        $u=\phi {(j)}^{[U]}$
        
        $\beta_v ^{(j)} \in [\mu_v]\backslash ( I_{S_v^u} \bigcup\limits_{m=1}^{j-1} \{\beta^{(m)}_v\} )\ \ \ \cdots \cdots \cdots (vii)$\\
        \vspace{0.2cm}
        Choose a subset $Z$ of $[\eta]\backslash \{v\}$ of size $\eta^{\prime}U$ and take its partition as $\{Z_1, Z_2, \ldots , Z_U \}$ such that $|Z_i|= \eta^{\prime}, \ \forall \ i \ \in [U]$.\\

        \For{$i=1$ \KwTo $U$}{
            \For{$t \ \in \ Z_i$}{
               $\beta_{t}^{(j)} \in I_{S_t^i}\backslash  \bigcup\limits_{m=1}^{j-1} \{\beta^{(m)}_t\}  \ \ \ \cdots \cdots  (viii)$  \\
            }
        }
    
        \For{$t \ \in \ [\Gamma]\backslash (Z \cup \{v\})$}{
            $\beta_{t}^{(j)} \in [\mu_t]\backslash  \bigcup\limits_{m=1}^{j-1} \{\beta^{(m)}_t\} $ \\
        }    
    }
    
    \Return{Final result}\;
    \caption{$\beta^{(j)}_i$ for Multi-User PPIR-ISI Scheme}
\end{algorithm}

From Algorithm 2, we can see that if the desired class is unidentifiable for any user $u \in [U]$, then a series of $k_{un} + 1$ queries is generated. In each query, $\eta^{\prime}$ known class-subclass index pairs, i.e., class-subclass index pairs corresponding to the messages available in the side information of each user $u \in [U]$, are sent (refer to eq. (viii) in Algorithm 2). Whereas, if the desired class is identifiable for any user $u \in [U]$, then the user $u$ retrieves a new message from ${u^{th}}$ query itself (refer to eq. (vii) in Algorithm 2) and the remaining $k_{un}+1-U$ queries are sent to maintain the uncertainty between identifiable and unidentifiable classes.

Given, $j \in \{1, 2,\ldots, k_{un}+1\}$, the server generates an answer $A^{(j)}(v,S)$: Encoding of the $\ell$th symbol of messages corresponding to the class-subclass indices $ \{ (i,\beta^{(j)}_i)\ \forall \ i \in [\Gamma]\}$ with a systematic $[2\Gamma - \eta^{\prime} , \Gamma]$ MDS code, $\forall \ \ell \in [L]$ and respond to the user with $(\Gamma - \eta^{\prime})L$ parity symbols.

Consider a query $Q^{(j)}(v, S)$ in which messages corresponding to $\eta^{\prime}$ number of class-subclass index pairs are already known to each user $u \in [U]$ (refer to eq. (viii) in Algorithm 2).
Since, for all $\ell \in [L]$, $(\Gamma - \eta^{\prime})$ parity symbols are sent by the server in answer $A^{(j)}(v,S)$, then from the property of MDS code, each user can obtain all $2\Gamma-\eta^{\prime}$ symbols. The total number of symbols sent by the server is $(k_{un}+1)(\Gamma - \eta^{\prime}) L$.
Thus, we have $$R_{PPIR-ISI}=\frac{L}{(k_{un}+1)(\Gamma - \eta^{\prime}) L} = \frac{1}{(k_{un}+1)(\Gamma - \eta^{\prime})}.$$

\begin{proof}[\textbf{Proof of correctness}] Now we prove that each user will get a new message from its desired class. When the desired class is identifiable for user $u \in [U]$, a series of $k_{un}+1$ queries are generated. Out of these, the $u^{th}$ query is sent with $\eta^{\prime}$ known class-subclass index pair from each Class $i$, $i\in [Z_u]$ (refer to eq. (viii) in Algorithm 2), and an unknown class-subclass index pair from the desired class $v_u$ (refer to eq. (vii) in Algorithm 2). Corresponding to the $u^{th}$ query, the server encodes the $\ell$th symbol of messages, $\forall \ \ell \in [L]$ using a systematic $[2\Gamma - \eta^{\prime}, \Gamma]$ MDS code. Using $(\Gamma - \eta^{\prime})L$ parity symbols and $\eta^{\prime}$ symbols that are known to the User $u$, it can decode the remaining unknown messages in the query. This way, User $u$ will get a new message from the desired class.

When the desired class is unidentifiable for user $u \in [U]$, the user sends $k_{un}+1$ queries to the server, and each query contains $\eta^{\prime}$ class-subclass index pairs (refer to eq. (viii) in Algorithm 2) which are known to the user $u$. For each query, the server encodes the $\ell$th symbol of messages, $\forall \ \ell \in [L]$ using a systematic $[2\Gamma - \eta^{\prime} , \Gamma]$ MDS code. Using $(\Gamma - \eta^{\prime})L$ parity symbols, the user can decode the remaining unknown messages in each query. From Algorithm 2, it is clear that for each class $i \in [\Gamma]$, no class-subclass index pair is repeated in the query that is generated over $k_{un}+1$ times. So at the end, User $u$ is guaranteed to obtain at least one new message from the unidentifiable class, including the desired class, $v_u \in \{\eta+1, \ldots, \Gamma\}$. Hence, the recovery constraint is satisfied.
\end{proof}

\begin{proof}[\textbf{Proof of privacy}] From Algorithm 2, it is clear that for each class $i \in [\Gamma]$, no class-subclass index pair is repeated in all $k_{un}+1$ queries, for any realization of $v_u \in [\Gamma], u \in [U]$. The query $Q^{(j)}(v,S)$ and the answer $A^{(j)}(v,S)$ are thus independent of the desired class $v_u \in [\Gamma]$ $ \forall u \in [U]$. Hence, the privacy constraint is satisfied.

\end{proof}

The following are the justifications for the assumptions in Theorem 5.
\begin{enumerate}
\item Queries are generated over $k_{un}+1$ times, where $k_{un}$ is defined as $\max\{k_{un}^1,\ldots,k_{un}^U\}$. If the desired class of user $u$ is identifiable, a new message from the desired class is recovered in $u^{th}$ query. If the desired class is unidentifiable, then according to Algorithm 2, identifiable classes require at most $k_{un}+1$ messages as its side information (refer to eq. (viii) in Theorem 2). This means that the side information of identifiable classes should be at least $k_{un}+1$. Hence the strict inequality, $k_1^u,k_2^u,\ldots, k_\eta^u > k_{un}, \ \forall \ u \in [U]$.
\item When the desired class of User $u$ is identifiable then it gets satisfied from the answer $A^{(u)}(v, S)$. If $k_{un}+1 < U$, then the demand of $U-k_{un}-1$ users will not be satisfied if their desired class is identifiable. This means that the number of queries generated, i.e., $k_{un}+1$, should be at least $U$.
\item In the proposed scheme, a sufficient number of unknown messages should be present to avoid the repetition of class-subclass indices. Consider the case when the demand class, $v_u$, is the same for all $u \in [U]$. 
Also consider that the ${k_{un}+1}^{th}$ query corresponds to user $u$, where $u \ \in \ [U]$. If the unknowns subclass indices sent in previous $k_{un}$ are the same as the unknown messages to the user $u$ in Class $v_u$, then from $(vii)$ of Algorithm 2, User $u$ should have at least $k_{un}+1$ unknown messages to satisfy the privacy constraint (2). Thus, for all $i \ \in \ [\Gamma]$ and $u \ \in \ [U], \ \mu_i - \ k_i^u \geq k_{un}+1$.
\end{enumerate}

\begin{example}\label{ex5} Consider $U=2$ users connected to a single server with $\Gamma=7$ non-overlapping classes out of which $\eta=5$ classes are identifiable to both the users. The total 53 messages are distributed  in $\Gamma = 7$ classes, each message containing L = 2 symbols from $F_{13}.$
\begin{align*}
C1: &W^{(1)},W^{(8)},W^{(15)},W^{(22)},W^{(29)},W^{(36)},W^{(42)},\\
C2: &W^{(2)},W^{(9)},W^{(16)},W^{(23)},W^{(30)},W^{(37)}, W^{(52)},\\
C3: &W^{(3)},W^{(10)},W^{(17)},W^{(24)},W^{(31)},W^{(50)},W^{(51)},\\
C4: &W^{(4)},W^{(11)},W^{(18)},W^{(25)},W^{(32)},W^{(41)},W^{(43)},W^{(47)}, \\
& W^{(53)},\\
C5:  &W^{(5)},W^{(12)},W^{(19)},W^{(26)},W^{(33)},W^{(38)},W^{(44)},W^{(48)},\\
C6: &W^{(6)},W^{(13)},W^{(20)},W^{(27)},W^{(34)},W^{(39)},W^{(45)},\\
C7: &W^{(7)},W^{(14)},W^{(21)},W^{(28)},W^{(35)},W^{(40)},W^{(46)},W^{(49)}.
\end{align*}
The class-subclass index set, i.e., $\{ (i,\beta_i),\ \forall \ \beta_i\in [\mu_i]\}$ corresponding each message in C1 is 
$\{(1,1),(1,2),(1,3),(1,4),$ $(1,5),(1,6),(1,7)\},$ respectively.
Similarly, a class-subclass index is assigned to each message in C2, C3, C4, C5. Also,
$\mu_1=7,\mu_2=7,\mu_3=7,\mu_4=9,\mu_5=8,\mu_6=7,\mu_7=8$. Let the side information set corresponding to User 1 be $S^1=\{W^{(1)},W^{(8)},W^{(15)},W^{(22)}, W^{(2)}, W^{(9)}, W^{(16)},W^{(23)},W^{(3)},\\ W^{(10)},W^{(17)},W^{(24)},W^{(4)},W^{(11)}, W^{(18)},W^{(25)},W^{(32)},\\ W^{(5)}, W^{(12)},W^{(19)},W^{(26)},W^{(6)},W^{(13)}, W^{(20)}, W^{(7)},  \\ W^{(14)}\}$.
Let the side information set corresponding to User 2 be $S^2=\{ W^{(22)},W^{(29)},  W^{(36)},W^{(42)}, W^{(9)}, W^{(23)}, W^{(30)}, \\ W^{(37)}, W^{(3)},W^{(10)},W^{(24)},W^{(31)},W^{(18)},W^{(25)}, W^{(32)}, \\ W^{(41)}, W^{(43)},W^{(47)},W^{(26)},W^{(33)},W^{(38)},W^{(44)},W^{(20)}, \\ W^{(27)}, W^{(7)},W^{(21)},W^{(28)}\}$.
Observe that, $k_1^1= 4, k_2^1=4, k_3^1=4, k_4^1=5, k_5^1=4, k_6^1=3, k_7^1=2$ and $k_1^2= 4, k_2^2=4, k_3^2=4, k_4^2=6, k_5^2=4, k_6^2=2, k_7^2=3$. WLOG, we take the first five classes as identifiable and the remaining $6^{th} \text{ and } 7^{th}$ as unidentifiable, i.e., the users know the sets $I_{S_1^u}, I_{S_2^u},I_{S_3^u},I_{S_4^u},I_{S_5^u}, \ \forall \ u \in [2]$, and the sets $I_{S_6^u}$ and $I_{S_7^u}$ are not known to the user  $u, \ \forall u \in [2] $. Clearly, $k_{un} = \max\{k_{un}^1,k_{un}^2\} = 3.$
The user sends the value of $\eta^{\prime}=2$ and the following queries to the server.
$$Q^{(j)} (v,S) = \{ (i,\beta^{(j)}_i), \forall \ i \in [7] \}, \  \forall \ j \in \{1,2,3,4\}.$$
\noindent\textit{\textbf{Case 1}}: Let $v_1=2, v_2=3$ be the desired classes.
Using Algorithm 2, the following queries are generated. 
\begin{align*}
Q^{(1)} (v,S) &= \{ (1,1),(2,5),(3,2),(4,3),(5,4),(6,2),(7,8)\},\\
Q^{(2)} (v,S) &= \{ (1,2),(2,1),(3,3),(4,4),(5,6),(6,1),(7,3)\},\\
Q^{(3)} (v,S) &= \{ (1,4),(2,2),(3,1),(4,6),(5,1),(6,4),(7,1)\},\\
Q^{(4)} (v,S) &= \{ (1,3),(2,4),(3,4),(4,1),(5,2),(6,3),(7,5)\}.
\end{align*}
From the above query set, we can see that no subclass index of a particular class is repeated, thus making the server oblivious to the demand class of the user. The desired class, $v_1$ of user 1 is 2. So, in $Q^{(1)} (v, S)$, an unknown subclass index is sent from Class 2 corresponding to User 1. Now consider the set $Z=[\eta]\backslash \{v\}$, where, $v=v_1=2$, i.e., $Z=\{1,3,4,5\}$. According to Algorithm 2, $Z$ is partitioned into $U$ number of $\eta^{\prime}$ sized subsets, for instance, let $Z_1=\{1,3\}$ and $Z_2=\{4,5\}$. For classes, $ i \in \ Z_1$, known subclass indices are sent corresponding to User 1, and for classes, $i \in \ Z_2$, known subclass indices are sent corresponding to User 2. Similarly, for the remaining queries.
For all $ j \in \{1,2,3,4\}$, the server responds to the users with the answer $A^{(j)}(v,S)$:
Encoding of each symbol of the messages corresponding to the class-subclass index pairs $\{ (i,\beta^{(j)}_i), \ \forall \ i \in [7] \}$ with a systematic $[12,7]$ MDS code and respond to the users with $5$ parity symbols. Therefore, the rate is $R=\frac{1}{4 \times 5}=\frac{1}{20}$.
In query $Q^{(1)} (v,S)$, User 1 already knows the messages $W^{(1)}$ and $W^{(10)}$ corresponding to the class-subclass index pairs $(1,1)$ and $(3,2)$, respectively. After receiving $5$ parity symbols, for each $\ell=1,2$, in the answer $A^{(1)}(v,S)$, User 1 will obtain all the messages corresponding to the $Q^{(1)} (v,S)$, i.e., $W^{(1)}, W^{(30)}, W^{(10)}, W^{(18)},W^{(26)},W^{(13)}$ and $W^{(49)}$. The new message obtained from Class 2 is $W^{(18)}$.
Similarly, User 2 will obtain all the messages corresponding to query $Q^{(2)} (v, S)$, and the new messages obtained from Class 3 is $W^{(17)}$.\\
\noindent\textit{\textbf{Case 2}}: Let $v_1=6, v_2=3$ be the desired classes.
Using Algorithm 2, the following queries are generated.
\begin{align*}
Q^{(1)} (v,S) &= \{ (1,1),(2,5),(3,2),(4,3),(5,4),(6,2),(7,8)\},\\
Q^{(2)} (v,S) &= \{ (1,2),(2,1),(3,3),(4,4),(5,6),(6,1),(7,3)\},\\
Q^{(3)} (v,S) &= \{ (1,4),(2,2),(3,1),(4,6),(5,1),(6,4),(7,1)\},\\
Q^{(4)} (v,S) &= \{ (1,3),(2,4),(3,4),(4,1),(5,2),(6,3),(7,5)\}.
\end{align*}
From the above query set, we can see that no subclass index of a particular class is repeated, thus making the server oblivious to the demand class of the users. For all $ j \in \{1,2,3,4\}$, the server responds to the users with the answer $A^{(j)}(v,S)$:
Encoding of each symbol of the messages corresponding to the class-subclass index pairs $\{ (i,\beta^{(j)}_i), \ \forall \ i \in [7] \}$ with a systematic $[12,7]$ MDS Code and respond to the user with $5$ parity symbols. Therefore, the rate is $R=\frac{1}{4 \times 5}=\frac{1}{20}$.
In query $Q^{(j)} (v,S), \forall j \ \in \{1,2,3,4\}$, User 1 already knows at least two messages. Hence, after receiving $5$ parity symbols, for each $\ell=1,2$, the user can decode all the messages in each query, and that includes messages from the unidentifiable classes also. In the end, User 1 will obtain at least one new message from its desired Class 6.
User 2 will obtain all the messages corresponding to query $Q^{(2)} (v,S)$, and the new message obtained from Class 3 is $W^{(17)}$.\\
\noindent\textit{\textbf{Case 3}}: Let $v_1=6, v_2=7$ be the desired classes.
Using Algorithm 2, the following queries are generated.
\begin{align*}
Q^{(1)} (v,S) &= \{ (1,4),(2,5),(3,1),(4,6),(5,1),(6,2),(7,8)\},\\
Q^{(2)} (v,S) &= \{ (1,3),(2,4),(3,4),(4,1),(5,2),(6,1),(7,3)\},\\
Q^{(3)} (v,S) &= \{ (1,5),(2,1),(3,2),(4,3),(5,4),(6,4),(7,1)\},\\
Q^{(4)} (v,S) &= \{ (1,2),(2,2),(3,3),(4,4),(5,6),(6,3),(7,5)\}.
\end{align*}
From the above query set, we can see that no index of a particular class is repeated, thus making the server oblivious to the demand class of the users. For all $ j \in \{1,2,3,4\}$, the server responds to the users with the answer $A^{(j)}(v,S)$: 
Encoding of each symbol of the messages corresponding to the class-subclass index pairs $\{ (i,\beta^{(j)}_i) \ \forall \ i \in [7] \}$ with a systematic $[12,7]$ MDS code and respond to the user with $5$ parity symbols. Therefore, the rate is $R=\frac{1}{4 \times 5}=\frac{1}{20}$. 
In query $Q^{(j)} (v, S), \forall j \ \in \{1,2,3,4\}$, User 1 and User 2 already know at least two messages. Hence, after receiving $5$ parity symbols, for each $\ell=1,2$, the users can decode all the messages in each query, and that includes messages from the unidentifiable classes. In the end, User 1 and User 2 will obtain at least one new message from their desired classes 6 and 7, respectively.
\end{example}

\textbf{Note:} In Example \ref{ex5}, if we run Algorithm 1, $U$ times, then the achieved rate using (5) is $\frac{1}{24}$. Whereas if we use Algorithm 2, then the achieved rate using (\ref{multi-user rate}) is $\frac{1}{20}$.

\section{Conclusion}
In this work, we have presented the problem of single server PPIR-ISI, where, the user is not oblivious to the identity of all side information messages and given a scheme for it. For certain cases we gave analytical proof to show that PPIR-ISI will strictly improve the rate over the PPIR-USI. Further, we gave a scheme for the PPIR-ISI problem with multi-user case.  Some possible directions for further research are:
\begin{itemize}
\item In this paper, Theorems \ref{compare}, \ref{compare2}, and \ref{compare3} present three cases demonstrating that the proposed PPIR-ISI performs better than the PPIR-USI scheme \cite{ref6}. However, beyond these specific cases, there exist other similar scenarios, such as Example 2. Investigating the general conditions under which the proposed PPIR-ISI scheme performs better than the PPIR-USI scheme \cite{ref6} would be interesting.
\item The proposed scheme leverages the identifiability of the classes, under specific conditions. It would be interesting to explore  new PPIR-ISI schemes, which leverage the identifiability of the classes, for scenarios where the given conditions are not satisfied.
\item In this paper, we focus on two possibilities: either the user knows all the subclass indices for its side information belonging to a class, or it knows none of them. However, a more general scenario can be considered where there exist some classes for which the user is aware of the subclass index of certain messages in its side information while remaining unaware of the subclass index for the remaining messages in its side information.
\end{itemize}

\section*{Acknowledgement}
This work was supported partly by the Science and Engineering Research Board (SERB) of the Department of Science and Technology (DST), Government of India, through J.C Bose National Fellowship to Prof. B. Sundar Rajan, by the Ministry of Human Resource Development (MHRD), Government of India, and through Indian Institute of Science (IISc) C V Raman Postdoc Fellowship to Charul Rajput.

\bibliographystyle{ieeetr}
\bibliography{ref}

\begin{thebibliography}{10}

\bibitem{ref1}
B.~Zolfaghari, G.~Srivastava, S.~Roy, H.~R. Nemati, F.~Afghah, T.~Koshiba,
  A.~Razi, K.~Bibak, P.~Mitra, and B.~K. Rai, ``Content delivery networks:
  State of the art, trends, and future roadmap,'' {\em ACM Computing Surveys
  (CSUR)}, vol.~53, no.~2, pp.~1--34, 2020.

\bibitem{ref2}
S.~Cui, M.~R. Asghar, and G.~Russello, ``Multi-cdn: Towards privacy in content
  delivery networks,'' {\em IEEE Transactions on Dependable and Secure
  Computing}, vol.~17, no.~5, pp.~984--999, 2018.

\bibitem{ref5}
S.~A. Obead and J.~Kliewer, ``Multi-message pliable private information
  retrieval,'' in {\em 2022 IEEE Information Theory Workshop (ITW)},
  pp.~714--719, IEEE, 2022.

\bibitem{ref3}
B.~Chor, E.~Kushilevitz, O.~Goldreich, and M.~Sudan, ``Private information
  retrieval,'' {\em Journal of the ACM (JACM)}, vol.~45, no.~6, pp.~965--981,
  1998.

\bibitem{ref4}
S.~Kadhe, B.~Garcia, A.~Heidarzadeh, S.~El~Rouayheb, and A.~Sprintson,
  ``Private information retrieval with side information,'' {\em IEEE
  Transactions on Information Theory}, vol.~66, no.~4, pp.~2032--2043, 2019.

\bibitem{ref7}
Y.-P. Wei, K.~Banawan, and S.~Ulukus, ``The capacity of private information
  retrieval with partially known private side information,'' {\em IEEE
  Transactions on Information Theory}, vol.~65, no.~12, pp.~8222--8231, 2019.

\bibitem{ref8}
F.~Kazemi, E.~Karimi, A.~Heidarzadeh, and A.~Sprintson, ``Single-server
  single-message online private information retrieval with side information,''
  in {\em 2019 IEEE International Symposium on Information Theory (ISIT)},
  pp.~350--354, IEEE, 2019.

\bibitem{ref9}
S.~Li and M.~Gastpar, ``Single-server multi-message private information
  retrieval with side information: The general cases,'' in {\em 2020 IEEE
  International Symposium on Information Theory (ISIT)}, pp.~1083--1088, IEEE,
  2020.

\bibitem{ref10}
Z.~Chen, Z.~Wang, and S.~A. Jafar, ``The capacity of t-private information
  retrieval with private side information,'' {\em IEEE Transactions on
  Information Theory}, vol.~66, no.~8, pp.~4761--4773, 2020.

\bibitem{ref11}
A.~Heidarzadeh and A.~Sprintson, ``The linear capacity of single-server
  individually-private information retrieval with side information,'' in {\em
  2022 IEEE International Symposium on Information Theory (ISIT)},
  pp.~2833--2838, IEEE, 2022.

\bibitem{ref12}
A.~Heidarzadeh and A.~Sprintson, ``The linear capacity of single-server
  individually-private information retrieval with side information,'' in {\em
  2022 IEEE International Symposium on Information Theory (ISIT)},
  pp.~2833--2838, IEEE, 2022.

\bibitem{ref13}
Y.~Lu and S.~A. Jafar, ``On single server private information retrieval with
  private coded side information,'' {\em IEEE Transactions on Information
  Theory}, 2023.

\bibitem{ref14}
M.~J. Siavoshani, S.~P. Shariatpanahi, and M.~A. Maddah-Ali, ``Private
  information retrieval for a multi-message scenario with private side
  information,'' {\em IEEE Transactions on Communications}, vol.~69, no.~5,
  pp.~3235--3244, 2021.

\bibitem{ref6}
S.~A. Obead, H.-Y. Lin, and E.~Rosnes, ``Single-server pliable private
  information retrieval with side information,'' in {\em 2023 IEEE
  International Symposium on Information Theory (ISIT)}, pp.~1526--1531, IEEE,
  2023.

\bibitem{LX2004}
S.~Ling and C.~Xing, {\em Coding theory: a first course}.
\newblock Cambridge University Press, 2004.

\end{thebibliography}

\end{document}